\documentclass[aps,pra,superscriptaddress,twocolumn,showpacs]{revtex4}
\usepackage[colorlinks, linkcolor=blue, citecolor=blue, urlcolor=blue, breaklinks=true]{hyperref}
\usepackage{amsmath}
\usepackage{graphicx}
\usepackage{amssymb}
\usepackage{bbm, color}
\usepackage{relsize}
\usepackage[title]{appendix}
\usepackage{amsthm}
\usepackage{lipsum}
\usepackage{listings}
\usepackage{bbold}

\theoremstyle{plain}
\newtheorem{thm}{Theorem}
\newtheorem{lem}[thm]{Lemma}

\theoremstyle{plain}
\newtheorem{defn}[thm]{Definition}

\newcommand{\ket}[1]{\left|#1\right>}

\newcommand*{\tn}[1]{{\textnormal{#1}}}

\newcommand{\xx}{\mathcal{X}}
\newcommand{\ma}{\mathcal{A}}
\newcommand{\mb}{\mathcal{B}}
\newcommand{\ba}{\begin{eqnarray}}
\newcommand{\ea}{\end{eqnarray}}

\newcommand{\dm}[1]{\ketbra{#1}{#1}}
\newcommand{\ketbra}[2]{|{#1}\rangle\!\langle{#2}|}
\newcommand{\eq}[1]{(\hyperref[eq:#1]{\ref*{eq:#1}})}
\newcommand{\tI}{\tilde{I}}
\newcommand{\tD}{\tilde{D}}
\newcommand{\mr}{\mathcal{R}}
\newcommand{\lemm}[1]{\hyperref[lemm:#1]{Lemma~\ref*{lemm:#1}}}

\DeclareMathOperator{\Tr}{Tr}

\begin{document}

\title{Activation and superactivation of single-mode Gaussian quantum channels }
\author{Youngrong Lim}
\affiliation{School of Computational Sciences, Korea Institute for Advanced Study, Seoul 02455, Korea}
\affiliation{Research Institute of Mathematics, Seoul National University, Seoul 08826, Korea}
\affiliation{Department of Mathematics and Research Institute for Basic Sciences, Kyung Hee University, Seoul 02447, Korea}
 \author{Ryuji Takagi}
 \affiliation{Center for Theoretical Physics and Department of Physics, Massachusetts Institute of Technology, Cambridge, Massachusetts 02139, USA}
 \author{Gerardo Adesso}
 \affiliation{Centre for the Mathematics and Theoretical Physics of Quantum Non-Equilibrium Systems, School of Mathematical Sciences, University of Nottingham, University Park, Nottingham NG7 2RD, United Kingdom}
\author{Soojoon Lee}
\affiliation{Department of Mathematics and Research Institute for Basic Sciences, Kyung Hee University, Seoul 02447, Korea}
\affiliation{Centre for the Mathematics and Theoretical Physics of Quantum Non-Equilibrium Systems, School of Mathematical Sciences, University of Nottingham, University Park, Nottingham NG7 2RD, United Kingdom}

\date{\today}
\pacs{03.67.-a, 03.65.Ud, 03.67.Bg, 42.50.-p}

\begin{abstract}
Activation of quantum capacity is a surprising phenomenon according to which the quantum capacity of a certain channel may increase by combining it with another channel with zero quantum capacity. Superactivation describes an even more particular occurrence, in which both channels have zero quantum capacity, but their composition has a nonvanishing one. We investigate these effects for all single-mode phase-insensitive Gaussian channels, which include thermal attenuators and amplifiers, assisted by a two-mode positive-partial-transpose channel. Our result shows that activation phenomena are special but not uncommon. We can reveal superactivation in a broad range of thermal attenuator channels, even when the transmissivity is quite low. This means that we can transmit quantum information reliably through very noisy Gaussian channels having zero quantum capacity. We further show that no superactivation is possible for entanglement-breaking Gaussian channels in physically relevant circumstances by proving the non-activation property of the coherent information of bosonic entanglement-breaking channels with finite input energy.
 \end{abstract}

\maketitle

\section{Introduction} \label{intro}
Quantum channels are ubiquitous tools for quantum information theory, quantum communication, and open quantum dynamics. The capacity of a channel is a central metric to assess its capability of reliably transmitting information over a large number of uses with asymptotically vanishing error.  There are several relevant notions of channel capacity depending on the given physical setting and type of information to be sent. For instance, the classical capacity is the transmission rate at which classical bits can be reliably sent~\cite{Holevo} while the quantum capacity refers to the corresponding quantity when quantum bits are to be sent~\cite{Shor95}.
The private capacity is another relevant quantity that plays a central role in cryptographical settings where one is to send classical bits with privacy~\cite{Private}.

Unfortunately, explicit formulas of the channel capacities have been known only for restricted cases. The reason is that, in general, nontrivial regularization formulas are  needed to characterize channel capacities. In other words, additivity no longer holds in general for one-shot capacity functions. This additivity violation has been proved for classical capacity~\cite{Hastings}, private capacity~\cite{Smith08,Li} and quantum capacity~\cite{DiVin,Smith07}. In particular, a stronger superadditive effect exists for the quantum capacity, called {\em superactivation}, in which we can have a positive quantum capacity for the product of two channels, even though each channel has zero quantum capacity on its own~\cite{Science}. Superactivation has also been found to occur in special instances of Gaussian channels~\cite{Smith}. This is an important observation, because Gaussian channels and Gaussian systems are implementable by simple quantum optical instruments~\cite{RMP}, e.g., phase shifters, beam splitters, single- and two-mode squeezers, and describe information transmission over optical fibres and real world telecommunications.

In the original work~\cite{Smith}, the two Gaussian channels for demonstrating superactivation were identified as the single-mode quantum-limited attenuator corresponding to the 50/50 beamsplitter, and a specific form of two-mode positive-partial-transpose (PPT) channel. Recently, {\em activation} effects (i.e., the fact that the quantum capacity of a channel is increased by combining it with a zero capacity channel) have been observed for Gaussian lossy channels corresponding to beamsplitters with a wider range of transmissivity.~\cite{Lim18}.

Here, we perform a systematic analysis of activation and superactivation effects in all single-mode phase-insensitive Gaussian channels, encompassing thermal attenuators and amplifiers, which model many physical situations and optical communication schemes~\cite{RMP,book,Holevobook,Caves}.  We show in particular that (super)activation is possible in a broad range of parameters for thermal attenuators, even when the corresponding beamsplitter transmissivity is quite low ($<$0.2). These are very noisy channels in the sense that only a small portion of the input state can be transmitted through them. Since the thermal attenuators for which the superactivation effect is confirmed are close to the entanglement-breaking (EB) channels~\cite{EB}, we also address the question whether it is possible to observe the same effect for EB channels. EB channels always have zero quantum capacity due to their anti-degradable property~\cite{Wildebook}, and it is known that EB channels with finite-dimensional input and output spaces cannot be superactivated~\cite{Watrous, Elton} (See also Appendix B). We extend this no-go result to infinite-dimensional bosonic EB channels with finite input energy, which implies that EB channels cannot be helped by another zero-capacity channel for transmitting quantum information in physically relevant circumstances.

 In Section~\ref{review}, some basic definitions and relations related to our work are introduced. In Section~\ref{main}, the main results are presented with some numerical and analytical methods. Finally, in Section~\ref{discussion}, we comment on a few remarks and open problems.

 \section{Preliminaries} \label{review}
Let us consider an isometry $V : {\cal H}(A) \rightarrow {\cal H}(B) \otimes {\cal H}(E)$. A quantum channel $\Phi : \rho_A \mapsto \rho_B$ is a completely-positive trace-preserving (CPTP) map corresponding to the action of the isometry on the input state of system $A$ followed by tracing out the environment $E$, written as $\Phi(\rho_A)=\text{Tr}_E V \rho_A V^{\dag}$~\cite{Stine}. If we trace out the output system $B$ instead of the environment, we get the complementary channel such as  $\Phi^c(\rho_A)=\text{Tr}_B V \rho_A V^{\dag}$. The quantum capacity $Q(\Phi)$ is defined as the maximum transmission rate of qubits through a given channel $\Phi$ with asymptotically vanishing error. By the quantum capacity theorem~\cite{Devetak05,Hayden08}, it is related to an entropic quantity called the coherent information, given by
\begin{equation}\label{coh}
I_c(\Phi,\rho)=H(\Phi(\rho))-H(\Phi^c(\rho)),
\end{equation}
where $H$ is the von Neumann entropy and $\rho$ is an input state of the channel. Then, the quantum capacity is given by
\begin{equation}
Q(\Phi)=\lim_{n\rightarrow \infty} \sup_{\rho_n} {I_c(\Phi^{\otimes n},\rho_n) \over n},
\end{equation}
where $\Phi^{\otimes n}$ means $n$ independent parallel uses of the channel, and $\rho_n$ is any state acting on ${\cal H}(A)^{\otimes n}$.

Gaussian states are the quantum states whose characteristic functions (or, equivalently, Wigner functions) have Gaussian distributions~\cite{Adesso,Serafini}. For an $n$-mode bosonic quantum state, there are $n$ pairs of position and momentum operators collectively written as $R=(Q_1,P_1,...,Q_n,P_n)^T$, that satisfy the commutation relation $[R_i,R_j]=iJ_{ij}$, where $J=\begin{pmatrix}0 &1 \\ -1  & 0\end{pmatrix}^{\oplus n}$. A Gaussian state can be entirely specified by the first and second moments of the quadrature operators instead of the density matrix $\rho$ itself, i.e., the displacement vector $d=\left< R \right>_{\rho}$, and the covariance matrix $\gamma$ with elements $\gamma_{ij}=\left<R_i R_j+R_j R_i \right>_{\rho}-2 \left<R_i \right>_{\rho} \left< R_j \right>_{\rho}$, respectively.

We focus our attention to Gaussian transformations, in which the quadrature operators are transformed by matrices in the real symplectic group, i.e., $S \in \text{Sp}(2n,\mathbb{R})$, $SJS^T=J$, such as $R^{\prime}=SR$. For each symplectic transformation $S$, there is a corresponding unitary transformation $U_S$, called symplectic unitary matrix, acting on quadrature operators as $R^{\prime}_i=U_S^{\dag} R_i U_S$ for $i=1,\dots,2n$. Then, a Gaussian channel is a CPTP map transforming Gaussian states to Gaussian states, which can be given by the symplectic dilation form as~\cite{Caruso}
\begin{equation}
\Phi_G(\rho_A)=\text{Tr}_E[U_S(\rho_A \otimes \rho_E)U^{\dag}_S],
\end{equation}
where $\rho_A$ is an input state and $\rho_E$ is a Gaussian state in the environment. In phase space, on the level of the covariance matrix $\gamma$ of a Gaussian state $\rho_A$, the action of a Gaussian channel can be expressed as $\gamma \rightarrow \Phi(\gamma) = X \gamma X^T +Y$, where $X$ and $Y=Y^T$ are $2n \times 2n$ real matrices constrained to the condition $Y+i(J-XJX^T)\geq 0$ to ensure that the channel is CPTP. In order to obtain the expression of the complementary channel, we need to consider a symplectic transformation having block matrix form $S=\begin{pmatrix} X &Z \\ X_c &Z_c \end{pmatrix}$. The number of modes of the input and output states is the same for the channels we care about in this work. If the environment modes are in vacuum states, a Gaussian channel and its complementary channel are described as $\Phi(\gamma)=X \gamma X^T +ZZ^T,~\Phi^c(\gamma)=X_c \gamma X^T_c +Z_c Z_c^T$.

For single-mode Gaussian channels, there exists a full classification~\cite{Holevo07}. Among those, we focus on the phase-insensitive channels, satisfying the condition that $X$ and $Y$ are diagonal. This class includes thermal attenuator, amplifier, and additive Gaussian noise channels. Note that the thermal attenuator is nothing but a beamsplitter operation with a transmissivity $t$ acting on the system mode $A$ and an ancillary environment mode $E$, after tracing out the latter. In general, the ancillary input of the beamsplitter can be in a thermal state with average photon number $N$. When the ancilla is in the vacuum state ($N=0$), the corresponding channel is known as quantum-limited attenuator. On the other hand, an amplifier channel corresponds to the operation consisting of a two-mode squeezer and a beam splitter on $A$ and $E$, which enables amplification of the input signal mode $A$. Similarly, if the environment mode $E$ is in the vacuum, we get a quantum-limited amplifier.

An EB channel always gives a separable output state, i.e., $\Phi \otimes \mathbb{1} (\rho_{AA^{\prime}})$ is separable, and it has zero quantum capacity. Similarly, an entanglement-binding channel, a type of PPT channel which also has zero quantum capacity, gives a non-distillable output state. In the Gaussian regime, because there is no bound entangled state of $1 \oplus n$ modes~\cite{bound}, an entanglement-binding channel needs at least a two-mode input and a two-mode output system. That is exactly the case for the PPT entanglement-binding channel that will be used in this work, suggested by~\cite{Smith}.
\begin{figure}[!t]
\includegraphics[width=8.5cm]{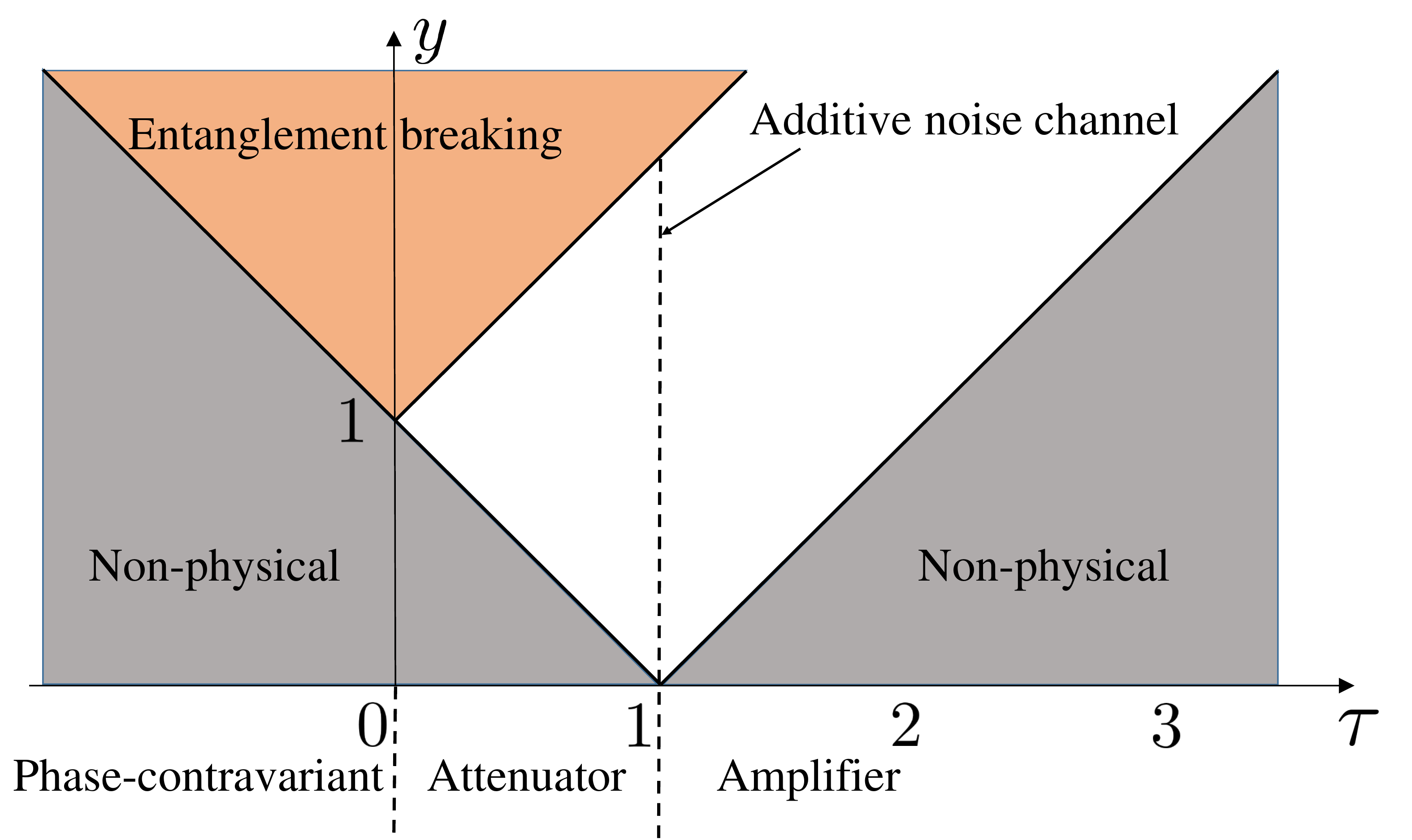}
\caption{(Color online) Classification of phase-insensitive single-mode Gaussian channels. Axes defined by $\tau=\det X, y=\sqrt{\det Y}$. Physical channels (CPTP) should satisfy the relation $y \ge |\tau-1|$. EB channels are on the (orange online) region of $y \ge |\tau|+1$.}
\label{fig1}
\end{figure}

\section{main results} \label{main}

We investigate which phase-insensitive single-mode Gaussian channels exhibit (super)activation of quantum capacity when combined with the two-mode PPT channel introduced in~\cite{Smith}. Our analysis will extend beyond the specific cases of the Gaussian lossy channel and the thermal attenuator with transmissivity near 0.5~\cite{Smith,Lim18}.

On the level of density matrices, a phase-insensitive channel $\Phi$ satisfies the condition
\begin{equation}
\Phi[e^{i\phi n_A}\rho e^{-i\phi n_A}]=e^{i\phi n_B}\Phi[\rho]e^{-i\phi n_B},
\end{equation}
where $\phi$ is any real number and $n_A$ ($n_B$) is the number operator on mode A (mode B).
As previously mentioned, phase-insensitive Gaussian channels are specified in phase space by diagonal matrices $X$ and $Y$.
All single-mode phase-insensitive Gaussian channels are depicted in Fig.~\ref{fig1} as a function of  $\tau=\det X$ and $y=\sqrt{\det Y}$, with $y \geq |\tau-1|$.

Let us consider the coherent information of the thermal attenuator $\Phi_{t,N}$, i.e., of the channel with $X=\sqrt{t}\mathbb{1}, Y=(1-t)(2N+1)\mathbb{1}$, where $0<t \le 1$ is the transmissivity and $N \geq 0$ is the mean photon number of the thermal noise. However, we cannot use the simple symplectic dilation explained in Section~\ref{review} because the thermal environment state is a mixed state. We can instead consider a symplectic dilation after purifying such thermal state to a pure two-mode squeezed state (Appendix A) to get the expression of the complementary channel. Apart from the case of zero thermal noise (equivalent to the quantum-limited attenuator, i.e., the Gaussian lossy channel), the exact formula for quantum capacity of the thermal attenuator is not known. However, there have been known not only lower bounds using a kind of thermal state input~\cite{Holevo01,nohlower}, but also the currently best upper bound as~\cite{upper,noh,plob},
\begin{align}
&{\cal{Q}}(\Phi_{t,N})\leq  \min \left\{ {\cal{Q}}_\text{data}(\Phi_{t,N}), {\cal{Q}}_\text{PLOB}(\Phi_{t,N})\right\} :={\cal{Q}}_U(\Phi_{t,N}),\nonumber \\
&{\cal{Q}}_\text{data}(\Phi_{t,N})=\tn{max}\left\{0,\tn{log}_2\left[ \frac{N(1-t)-t}{(1+N)(t-1)}\right] \right\}, \nonumber \\
&{\cal{Q}}_\text{PLOB}(\Phi_{t,N})=\tn{max}\left\{0,-\tn{log}_2 [(1-t)t^N]-g(N)\right\},
\end{align}
where $g(x)=(1+x)\log_2(1+x)-x\log_2 x$.

\begin{figure}[!t]
\includegraphics[width=8.5cm]{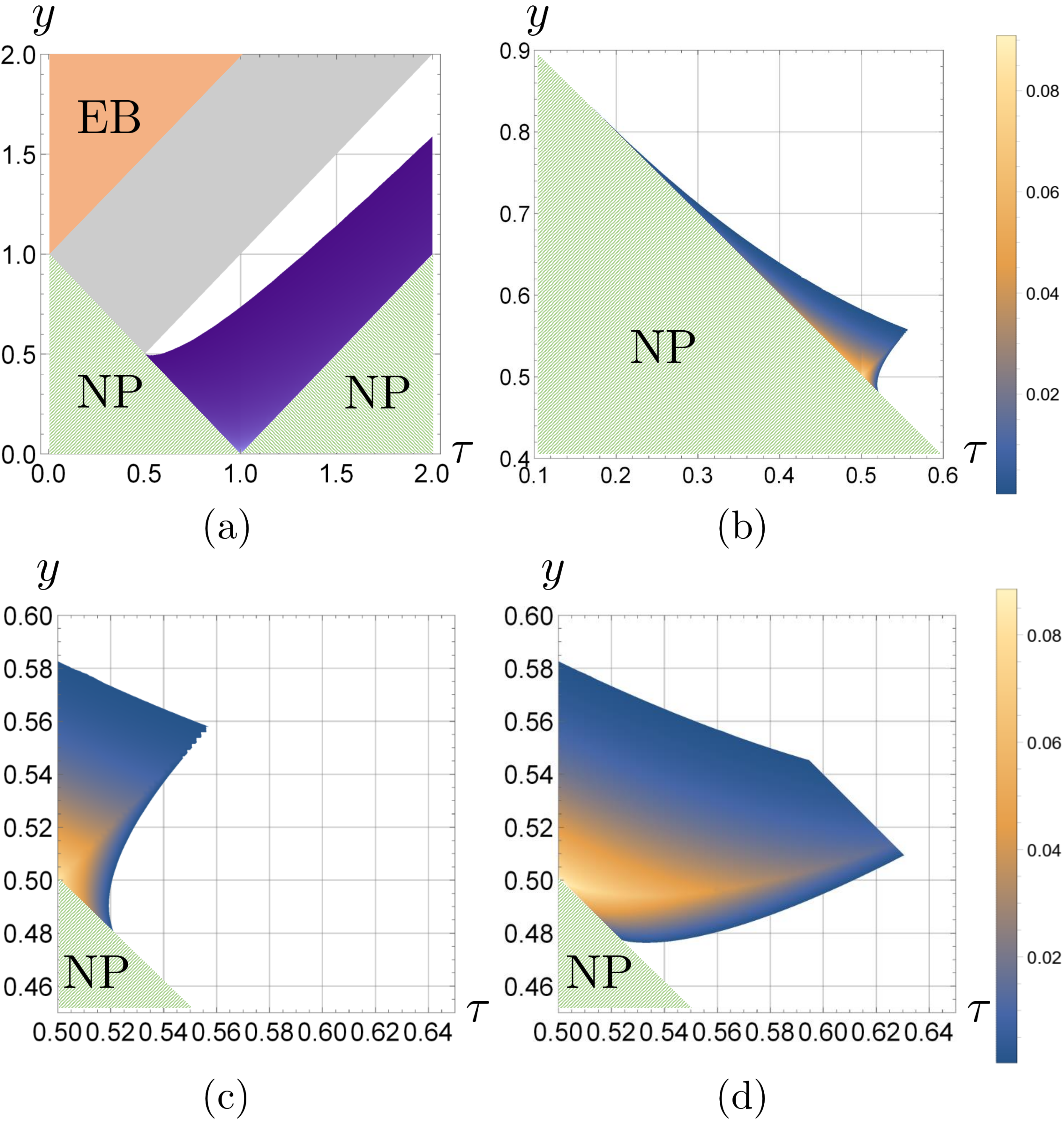}
\caption{(Color online) (a) Various regions having zero capacity (gray), zero maximum coherent information (white), and positive maximum coherent information (purple online, darker shading). EB region (orange online) and NP (non-physical) region (green online) are also specified~\cite{Kamil}. (b) Difference between the coherent information of the combined channel and the upper bound of quantum capacity. (c), (d) Comparing difference between the coherent information of the combined channel and the upper bound of quantum capacity/maximum coherent information in the region $\tau>0.5$. See text for further details.}
\label{fig2}
\end{figure}

We now have all the ingredients to test (super)activation of the quantum capacity. By using the symplectic dilation for thermal noise channels (Appendix A), we can obtain the covariance matrices of the combined channel output and complementary channel output. Since the PPT channel has zero quantum capacity, the coherent information of the combined channel should satisfy the following relation if there is no activation,
\begin{equation}\label{activation}
I_c(\Phi_\text{PPT} \otimes \Phi_{t,N},\rho_\text{in})\le {\cal{Q}}(\Phi_\text{PPT} \otimes \Phi_{t,N})\le {\cal{Q}}_U(\Phi_{t,N}),
\end{equation}
where $\Phi_\text{PPT}$ is a specific two-mode PPT channel suggested by Smith {\it et el.}~\cite{Smith}. Therefore, if we find an input state such that the coherent information of the combined channel exceeds the upper bound of the quantum capacity for the thermal attenuator, (super)activation is confirmed. In general, we need to search all possible three-mode input states, whose covariance matrices are described by 12 independent parameters, satisfying the physicality condition, i.e., $\gamma+i J \geq 0$~\cite{thmode}. Since the optimization over all those parameters is computationally intractable, we focus on a class of asymmetric input states specified by three parameters [Eq.~(\ref{input}) in Appendix A], generalizing a two-parameter family of input states used in previous works~\cite{Smith,Lim18}.

Although the quantum capacity of arbitrary single-mode phase-insensitive Gaussian channels is still unknown, there are more known facts regarding the maximal coherent information (one-shot quantum capacity)~\cite{Kamil}. In Fig.~\ref{fig2} (a), the gray region indicates channels with zero quantum capacity owing to their antidegradability, and the dark purple region contains channels with positive coherent information, thus also with positive quantum capacity. The intermediate (white) region, in between the purple and the gray regions, accommodates channels with zero maximum coherent information, but for which one cannot rule out the possibility of having positive quantum capacity. 

We compute numerically the difference between the coherent information with three-parameter optimized inputs of the combined channel, and the upper bound of the quantum capacity, i.e., $I_c(\Phi_\text{PPT} \otimes \Phi_{t,N},\rho_\text{in})-{\cal{Q}}_U(\Phi_{t,N})$, as in Fig.~\ref{fig2} (b). Our results show that (super)activation occurs in a broad range of parameters, even when the transmissivity is quite low ($\tau <0.2$). This result, which significantly extends previous findings \cite{Smith,Lim18}, also raises a question whether the violation of Eq.~(\ref{activation}) could be observed by a more thorough search when $\tau \to 0$ or even in the EB region.
For EB channels, however, we give a proof that it is not the case as long as the input states have finite energy (Appendix B). Further, we can show that our result covers all the three regions in Fig.~(\ref{fig2}) (a). Thus, there is supereactivation of quantum capacity and maximum coherent information for the gray regions.  Also, for the white region, there is superactivation of the maximum coherent information, as well as (super)activation of the quantum capacity. Finally, for the purple region, there is activation of the quantum capacity and maximum coherent information.  In addition, Fig.~(\ref{fig2}) (d) depicts the difference from the maximum coherent information instead of the upper bound for the quantum capacity. As expected, the region of activation of the maximum coherent information is much wider than the region of activation of the quantum capacity and the former fully incorporates the latter.

Another important remark is that in the $\tau>0.5$ region, we see that activation effects occur for thermal noise channels rather than quantum-limited channels (boundary on the non-physical channels) with the same transmissivity. For example, we cannot see any activation at $(\tau,y)=(0.53,0.47)$, but we see it at $(0.53,0.55)$. This seems counterintuitive, since thermal noise usually degrades the capacity of the channel, which means that it might prevent the activation. Because this can be a consequence of the fact that we have only constrained the optimization to a restricted family of input states, further investigation is needed to confirm these observations. We have also sought (super)activation for amplifier channels, but we cannot see any by our methods. This might come from the fact that the maximum coherent information has a relatively high value for the amplifiers, so it may limit activation. Therefore, we suggest a conjecture that single-mode Gaussian amplifiers cannot be activated.

\section{discussion}\label{discussion}

In this work, we have investigated the (super)activation of the quantum capacity in single-mode phase-insensitive Gaussian channels assisted with a two-mode positive-partial transpose channel. We found that, quite remarkably, a wide region of thermal attenuator channels can be activated, even when the transmissivity is quite low. 
This significantly extends the activatable region observed in the previous work, and our result gives a hope to further enlarge it by extending the search for the input space.
From our study, we cannot draw a conclusion about whether (super)activation happens also for the additive noise channels and amplifiers, but we conjecture these channels cannot exhibit (super)activation. 

One can ask several questions about the (super)activation in Gaussian channels. First thing is finding tighter upper bounds of the quantum capacity for the amplifiers and the additive noise channels in order to test the activation conclusively. Second one is investigating multi-mode channels instead of single-mode ones. It could possibly give more classes having zero capacity or upper bounds on them. Finally, one could consider a single-mode phase-sensitive channel, which involves squeezing elements and is thus more complicated to handle. It has been known that for the standard method dealing with a PPT channel and an antidegradable channel, squeezing is needed for superactivation~\cite{Wolf13}. Thus, if we find other classes of channels having zero capacity, it could be superactivated in other ways without squeezing elements.

Our results show overall that quantum information can be transmitted reliably through a significant variety of thermal attenuator Gaussian channels, even when they are very noisy, when combined with other zero-capacity channels. This can be of practical relevance to extend the range and robustness of secure quantum communication with continuous variables.

\section*{ACKNOWLEDGMENTS}
We thank Elton Yechao Zhu for useful discussions. This work was supported by Basic Science Research Program through the National Research Foundation of Korea (NRF) funded by the Ministry of Science and ICT (NRF-2016R1A2B4014928 \& NRF-2017R1E1A1A03070510), the Ministry of Science and ICT, Korea, under an ITRC Program, IITP-2018-2018-0-01402 and Ministry of Education (NRF-2017R1A6A3A01007264 \& NRF-2018R1D1A1B07047512). R.T. acknowledges the support from NSF, ARO, IARPA, and the Takenaka Scholarship Foundation. G.A. acknowledges financial support from the European Research Council (ERC) under the Standard Grant GQCOP (Grant Agreement no. 637352).

\onecolumngrid

\begin{appendices}
\numberwithin{equation}{section}
\label{appA}
\section{Symplectic dilation for thermal environment}

If the environment (mode $E'$) is not a pure state, which corresponds to single-mode thermal attenuator/amplifier with $N\neq 0$, we need to find a symplectic transformation in order to get the expression for the complementary channel. In our cases, environment is a thermal state instead of vacuum state, having an average photon number $N$. Its covariance matrix is $\gamma_{\text{th}}=(2N+1)\mathbb{1}_2$. In this simple case, we can easily consider the purification for the thermal state and finally get a two-mode squeezed vacuum (TMSV) state, its covariance matrix is given by
\begin{equation}
\gamma_{\text{TMSV}}=\begin{pmatrix} (2N+1)\mathbb{1} & 2\sqrt{N(N+1)}\mathbb{Z} \\ 2\sqrt{N(N+1)}\mathbb{Z} & (2N+1)\mathbb{1} \end{pmatrix},
\end{equation}
where $\mathbb{Z}=\begin{pmatrix} 1 & 0 \\ 0 & -1 \end{pmatrix}$. The TMSV state is indeed a pure state because its symplectic
eigenvalues are 1's.

Now, we can write the symplectic transformation for the thermal attenuator with transmissivity $t$. For $N=0$, we know the symplectic transformation is written as
\begin{equation}
S_0=\begin{pmatrix} \sqrt{t}\mathbb{1} & \sqrt{1-t}\mathbb{1} \\ \sqrt{1-t}\mathbb{1} & -\sqrt{t}\mathbb{1} \end{pmatrix}.
\end{equation}

Let us set $X_0=\sqrt{t}\mathbb{1},~Z_0= \sqrt{1-t}\mathbb{1},~X_{c0}=\sqrt{1-t}\mathbb{1},~Z_{c0}=-\sqrt{t}\mathbb{1}$. Then we can find a symplectic transformation for a thermal attenuator with $N \neq 0$ such as
\begin{equation}
S_{\text{th}}=\begin{pmatrix} X_{\text{th}} & Z_{\text{th}} \\ X_{c,\text{th}} & Z_{c,\text{th}} \end{pmatrix}=\begin{pmatrix} X_0 & Z_0 & 0 \\ X_{c0} & Z_{c0} & 0 \\ 0 & 0 & \mathbb{1} \end{pmatrix},
\end{equation}
where $X_{\text{th}}=X_0,~Z_{\text{th}}=\begin{pmatrix}Z_0 & 0\end{pmatrix},~X_{c,\text{th}}=\begin{pmatrix} X_{c0} \\ 0 \end{pmatrix},~Z_{c,\text{th}}=\begin{pmatrix} Z_{c0} & 0\\ 0 & \mathbb{1} \end{pmatrix}$, and all components are $2 \times 2$ block matrices. One can see that this $S$ is indeed a symplectic matrix, i.e., $SJ_3S^t=J_3$. Furthermore, we need to check whether this symplectic transformation gives the proper channel and the complementary channel of the thermal attenuator. The full transformation is written in terms of covariance matrices as
\begin{align}
S_{\text{th}}(\gamma_{\text{in}} \oplus \gamma_{\text{TMSV}})S_{\text{th}}^t&=\begin{pmatrix} X_0 & Z_0 & 0 \\ X_{c0} & Z_{c0} & 0 \\ 0 & 0 & \mathbb{1} \end{pmatrix}
\begin{pmatrix} \gamma_{\text{in}} & 0 & 0 \\ 0 & (2N+1)\mathbb{1} & 2\sqrt{N(N+1)}\mathbb{Z} \\ 0 & 2\sqrt{N(N+1)}\mathbb{Z} & (2N+1)\mathbb{1} \end{pmatrix}
\begin{pmatrix} X_0^t & X_{c0}^t & 0 \\ Z_0^t & Z_{c0}^t & 0 \\ 0 & 0 & \mathbb{1} \end{pmatrix} \nonumber \\
&=\begin{pmatrix} X_0 \gamma_{\text{in}} X_0^t+(2N+1)Z_0Z_0^t & X_0  \gamma_{\text{in}}X_{c0}^t  + (2N+1)Z_0 Z_{c0}^t & 2\sqrt{N(N+1)}Z_0 \mathbb{Z} \\
X_{c0} \gamma_{\text{in}}X_{0}^t+(2N+1)Z_0^t Z_{c0} & X_{c0}\gamma_{\text{in}}X_{c0}^t+(2N+1)Z_{c0}Z_{c0}^t & 2\sqrt{N(N+1)}Z_{c0}\mathbb{Z} \\
2\sqrt{N(N+1)}Z_0^t \mathbb{Z} & 2\sqrt{N(N+1)}Z_{c0}^t\mathbb{Z} & (2N+1)\mathbb{1} \end{pmatrix}.
\end{align}
If we trace out the environment modes, the covariance matrix after the channel action is $\gamma_{\text{out}}= X_0 \gamma_{\text{in}} X_0^t+(2N+1)Z_0Z_0^t=t\gamma_{\text{in}}+(2N+1)(1-t)\mathbb{1}$, as expected. If we trace out the input mode in order to obtain the output of the complementary channel, 
\begin{equation}
\gamma_{\text{com}}=\begin{pmatrix}X_{c0}\gamma_{\text{in}}X_{c0}^t+(2N+1)Z_{c0}Z_{c0}^t & 2\sqrt{N(N+1)}Z_{c0}\mathbb{Z} \\
2\sqrt{N(N+1)}Z_{c0}^t\mathbb{Z} & (2N+1)\mathbb{1} \end{pmatrix}=\begin{pmatrix}(1-t)\gamma_{\text{in}}+(2N+1)t\mathbb{1} & -2\sqrt{N(N+1)}\sqrt{t}\mathbb{Z} \\ -2\sqrt{N(N+1)}\sqrt{t}\mathbb{Z} & (2N+1)\mathbb{1} \end{pmatrix}.
\end{equation}
Here if we also trace out the ancillary mode used for purifying environment, the weak-complementary channel is obtained, i.e., $\gamma_{\text{wcom}}=(1-t)\gamma_{\text{in}}+(2N+1)t\mathbb{1}$.

From these results and the symplectic transformation of PPT channel given by
\begin{equation}
\small S_{\text{PPT}}=\left(
\begin{array}{cccccccc}
 \frac{a^2-1}{2 a} & 0 & \frac{a^2+1}{2 \sqrt{3} a} & 0 & \frac{a^2+1}{\sqrt{6} a} & 0 & 0 & 0 \\
 0 & -\frac{a^2-1}{2 a} & 0 & \frac{a^2+1}{2 \sqrt{3} a} & 0 & \frac{a^2+1}{\sqrt{6} a} & 0 & 0 \\
 -\frac{a^2+1}{2 \sqrt{3} a} & 0 & \frac{1}{6} \left(-a+2 b-\frac{2}{b}+\frac{1}{a}\right) & 0 & -\frac{(a+b) (a b-1)}{3 \sqrt{2} a b} & 0 & -\frac{b^2+1}{\sqrt{6} b} & 0 \\
 0 & -\frac{a^2+1}{2 \sqrt{3} a} & 0 & \frac{1}{6} \left(a-2 b+\frac{2}{b}-\frac{1}{a}\right) & 0 & \frac{(a+b) (a b-1)}{3 \sqrt{2} a b} & 0 & -\frac{b^2+1}{\sqrt{6} b} \\
 -\frac{a^2+1}{\sqrt{6} a} & 0 & -\frac{(a+b) (a b-1)}{3 \sqrt{2} a b} & 0 & \frac{1}{6} \left(-2 a+b-\frac{1}{b}+\frac{2}{a}\right) & 0 & \frac{b^2+1}{2 \sqrt{3} b} & 0 \\
 0 & -\frac{a^2+1}{\sqrt{6} a} & 0 & \frac{(a+b) (a b-1)}{3 \sqrt{2} a b} & 0 & \frac{1}{6} \left(2 a-b+\frac{1}{b}-\frac{2}{a}\right) & 0 & \frac{b^2+1}{2 \sqrt{3} b} \\
 0 & 0 & \frac{b^2+1}{\sqrt{6} b} & 0 & -\frac{b^2+1}{2 \sqrt{3} b} & 0 & -\frac{b^2-1}{2 b} & 0 \\
 0 & 0 & 0 & \frac{b^2+1}{\sqrt{6} b} & 0 & -\frac{b^2+1}{2 \sqrt{3} b} & 0 & \frac{b^2-1}{2 b} \\
\end{array}
\right) \nonumber
\end{equation}
\begin{align}
:=\begin{pmatrix}X_{\text{PPT}} & Z_{\text{PPT}} \\ X_{c,\text{PPT}} & Z_{c,\text{PPT}} \end{pmatrix}
\end{align}
where $a,b \in [1,\infty)$ and $X_{\text{PPT}}, Z_{\text{PPT}}, X_{c,\text{PPT}}, Z_{c,\text{PPT}}$ are $4 \times 4$ block matrices. Then we can finally obtain the symplectic transformation of the combined channel $\Phi_{\text{PPT}}\otimes \Phi_{\text{th}}$.
If we define $X=X_{\text{PPT}} \oplus X_0, Z=Z_{\text{PPT}} \oplus Z_0, X_c=X_{c,\text{PPT}} \oplus X_{c0}, Z_c=Z_{c,\text{PPT}} \oplus Z_{c0}$ as $6 \times 6$ matrices, the total symplectic transformation of the combined channel can be written as
\begin{equation}
S(\gamma_{\text{in}} \oplus \gamma_{\text{vac}} \oplus \gamma_{\text{TMSV}})S^t=\begin{pmatrix} X & Z & 0   \\ X_c & Z_c & 0  \\ 0 &0 & \mathbb{1}  \end{pmatrix} \begin{pmatrix}\gamma_{\text{in}} & 0 & 0 \\ 0 & \gamma_{\text{vac}} & 0 \\ 0 & 0 & \gamma_{\text{TMSV}} \end{pmatrix} \begin{pmatrix}  X^t & X^t_c & 0   \\ Z^t & Z^t_c & 0  \\ 0 &0 & \mathbb{1} \end{pmatrix} \nonumber
\end{equation}
\begin{equation}
=\begin{pmatrix} X\gamma_{\text{in}}X^t+Z_{\text{PPT}}Z^t_{\text{PPT}} \oplus Z_{\text{th}} \gamma_{\text{TMSV}} Z^t_{\text{th}} & (X\gamma_{\text{in}}X^t_c~,~0)+Z_{\text{PPT}}Z^t_{c,\text{PPT}} \oplus Z_{\text{th}}\gamma_{\text{TMSV}}Z^t_{c,\text{th}}
\\  \begin{pmatrix} X_c \gamma_{\text{in}}X^t \\ 0 \end{pmatrix}+Z_{c,\text{PPT}}Z^t_{\text{PPT}} \oplus Z_{c,\text{th}}\gamma_{\text{TMSV}} Z^t_{\text{th}} & \begin{pmatrix} X_c \gamma_{\text{in}} X^t_c & 0 \\ 0 & 0 \end{pmatrix}+Z_{c,\text{PPT}}Z^t_{c,\text{PPT}} \oplus Z_{c,\text{th}} \gamma_{\text{TMSV}} Z^t_{c,\text{th}} \end{pmatrix},
\end{equation}
where $\gamma_{\text{vac}}=\mathbb{1}_4$ and $\gamma_{\text{in}}$ is a channel input state with certain form as
\begin{equation}
\gamma_{\text {in}}=\left(
\begin{array}{cccccc}\label{input}
 \frac{x^4+1}{2 x^2} & 0 & 0 & 0 & \frac{\left(x^4-1\right) \left(y^2-1\right)}{4 x^2 y} & 0 \\
 0 & \frac{x^4+1}{2 x^2} & 0 & 0 & 0 & \frac{\left(x^4-1\right) \left(y^2-1\right)}{4 x^2 y} \\
 0 & 0 & \frac{z^4+1}{2 z^2} & 0 & \frac{\left(y^2+1\right) \left(z^4-1\right)}{4 y z^2} & 0 \\
 0 & 0 & 0 & \frac{z^4+1}{2 z^2} & 0 & -\frac{\left(y^2+1\right) \left(z^4-1\right)}{4 y z^2} \\
 \frac{\left(x^4-1\right) \left(y^2-1\right)}{4 x^2 y} & 0 & \frac{\left(y^2+1\right) \left(z^4-1\right)}{4 y z^2} & 0 & f(x,y,z) & 0 \\
 0 & \frac{\left(x^4-1\right) \left(y^2-1\right)}{4 x^2 y} & 0 & -\frac{\left(y^2+1\right) \left(z^4-1\right)}{4 y z^2} & 0 & f(x,y,z) \\
\end{array}
\right),
\end{equation}
where $f(x,y,z)=\frac{x^2 \left(y^2+1\right)^2 z^4+\left(x^4+1\right) \left(y^2-1\right)^2 z^2+x^2 \left(y^2+1\right)^2}{8 x^2 y^2 z^2}$, and $x,y,z \in [1, \infty)$ the squeezing parameters.
Consequently, the channel output and the complementary channel output are given by
\begin{align}
\gamma_{\text{out}}=&~X\gamma_{\text{in}}X^t+Z_{\text{PPT}}Z^t_{\text{PPT}} \oplus Z_{\text{th}} \gamma_{\text{TMSV}} Z^t_{\text{th}},
\\
\gamma_{\text{com}}=&\begin{pmatrix} X_c \gamma_{\text{in}} X^t_c & 0 \\ 0 & 0 \end{pmatrix}+Z_{c,\text{PPT}}Z^t_{c,\text{PPT}} \oplus Z_{c,\text{th}} \gamma_{\text{TMSV}} Z^t_{c,\text{th}}.
\end{align}

Next, we consider thermal amplifiers with amplifying parameter $G$, i.e., $\tau=G>1$. When $N=0$, the symplectic transformation is given by
\begin{equation}
S_{1}=\begin{pmatrix} \sqrt{G}\mathbb{1} & \sqrt{G-1}\mathbb{Z} \\ \sqrt{G-1}\mathbb{Z} & \sqrt{G}\mathbb{1} \end{pmatrix}.
\end{equation}
Let us set $X_1=\sqrt{G}\mathbb{1},~Z_1= \sqrt{G-1}\mathbb{Z},~X_{c1}=\sqrt{G-1}\mathbb{Z},~Z_{c1}=\sqrt{G}\mathbb{1}$. Then, by following same procedure for the thermal attenuator, we can obtain the symplectic transformation $S_{\text{am}}$ for $N>0$ as
\begin{equation}
S_{\text{am}}=\begin{pmatrix} X_{\text{am}} & Z_{\text{am}} \\ X_{c,\text{am}} & Z_{c,\text{am}} \end{pmatrix}=\begin{pmatrix} X_1 & Z_1 & 0 \\ X_{c1} & Z_{c1} & 0 \\ 0 & 0 & \mathbb{1} \end{pmatrix},
\end{equation}
where $X_{\text{am}}=X_1,~Z_{\text{am}}=\begin{pmatrix}Z_1 & 0\end{pmatrix},~X_{c,\text{am}}=\begin{pmatrix} X_{c1} \\ 0 \end{pmatrix},~Z_{c,\text{am}}=\begin{pmatrix} Z_{c1} & 0\\ 0 & \mathbb{1} \end{pmatrix}$, and all components represent $2 \times 2$ block matrices. Like the case of thermal attenuator, we need to check $S_{\text{am}}$ gives the proper channel and the complementary channel by looking at the full symplectic transformation as
\begin{align}
S_{\text{am}}(\gamma_{\text{in}} \oplus \gamma_{\text{TMSV}})S_{\text{am}}^t&=\begin{pmatrix} X_1 & Z_1 & 0 \\ X_{c1} & Z_{c1} & 0 \\ 0 & 0 & \mathbb{1} \end{pmatrix}
\begin{pmatrix} \gamma_{\text{in}} & 0 \\ 0 & \gamma_{\text{TMSV}}  \end{pmatrix}
\begin{pmatrix} X_1^t & X_{c1}^t & 0 \\ Z_1^t & Z_{c1}^t & 0 \\ 0 & 0 & \mathbb{1} \end{pmatrix} \nonumber \\
&=\begin{pmatrix}  X_{\text{am}}\gamma_{\text{in}} X_{\text{am}}^t+ Z_{\text{am}}\gamma_{\text{TMSV}} Z_{\text{am}}^t & X_{\text{am}}\gamma_{\text{in}} X_{\text{c,am}}^t+ Z_{\text{am}}\gamma_{\text{TMSV}} Z_{\text{c,am}}^t \\ X_{\text{c,am}}\gamma_{\text{in}} X_{\text{am}}^t+ Z_{\text{c,am}}\gamma_{\text{TMSV}} Z_{\text{am}}^t & X_{\text{c,am}}\gamma_{\text{in}} X_{\text{c,am}}^t+ Z_{\text{c,am}}\gamma_{\text{TMSV}} Z_{\text{c,am}}^t\end{pmatrix}.
\end{align}
After tracing out environment (system) modes, we get channel output (complementary channel output) written as
\begin{align}
\gamma_{\text{out}}=&~X_{\text{am}}\gamma_{\text{in}} X_{\text{am}}^t+ Z_{\text{am}}\gamma_{\text{TMSV}} Z_{\text{am}}^t=G\gamma_{\text{in}}+(2N+1)(G-1)\mathbb{1},
\\
\gamma_{\text{com}}=&X_{\text{c,am}}\gamma_{\text{in}} X_{\text{c,am}}^t+ Z_{\text{c,am}}\gamma_{\text{TMSV}} Z_{\text{c,am}}^t=\begin{pmatrix}(G-1)\mathbb{Z}\gamma_{\text{in}}\mathbb{Z}^t +(2N+1)G\mathbb{1} & 2\sqrt{N(N+1)}\sqrt{G}\mathbb{Z} \\ 2\sqrt{N(N+1)}\sqrt{G}\mathbb{Z} & (2N+1)\mathbb{1}\end{pmatrix}.
\end{align}
From these results, we can also construct the symplectic transformation of combined channel with PPT channel given by
\begin{equation}
S(\gamma_{\text{in}} \oplus \gamma_{\text{vac}} \oplus \gamma_{\text{TMSV}})S^t=\begin{pmatrix} X & Z & 0   \\ X_c & Z_c & 0  \\ 0 &0 & \mathbb{1}  \end{pmatrix} \begin{pmatrix}\gamma_{\text{in}} & 0 & 0 \\ 0 & \gamma_{\text{vac}} & 0 \\ 0 & 0 & \gamma_{\text{TMSV}} \end{pmatrix} \begin{pmatrix}  X^t & X^t_c & 0   \\ Z^t & Z^t_c & 0  \\ 0 &0 & \mathbb{1} \end{pmatrix} \nonumber
\end{equation}
\begin{equation}
=\begin{pmatrix} X\gamma_{\text{in}}X^t+Z_{\text{PPT}}Z^t_{\text{PPT}} \oplus Z_{\text{am}} \gamma_{\text{TMSV}} Z^t_{\text{am}} & (X\gamma_{\text{in}}X^t_c~,~0)+Z_{\text{PPT}}Z^t_{c,\text{PPT}} \oplus Z_{\text{am}}\gamma_{\text{TMSV}}Z^t_{c,\text{am}}
\\  \begin{pmatrix} X_c \gamma_{\text{in}}X^t \\ 0 \end{pmatrix}+Z_{c,\text{PPT}}Z^t_{\text{PPT}} \oplus Z_{c,\text{am}}\gamma_{\text{TMSV}} Z^t_{\text{am}} & \begin{pmatrix} X_c \gamma_{\text{in}} X^t_c & 0 \\ 0 & 0 \end{pmatrix}+Z_{c,\text{PPT}}Z^t_{c,\text{PPT}} \oplus Z_{c,\text{am}} \gamma_{\text{TMSV}} Z^t_{c,\text{am}} \end{pmatrix},
\end{equation}
where $X=X_{\text{PPT}} \oplus X_1, X_c=X_{c,\text{PPT}} \oplus X_{c1}$ as $6 \times 6$ matrices.

\label{appB}
\section{Non-activation of coherent information for entanglement-breaking channels with finite input energy}

Here, we generalize the non-activation property of coherent information known for finite-dimensional entanglement-breaking channels to infinite-dimensional entanglement-breaking channels with finite input energy. Our discussion is closely related to the one in Ref.\,\cite{Shirokov2006} on the Holevo $\chi$-function while applying the continuity result of the coherent information shown in Ref.\,\cite{Holevo2010}.

Let $D(\xx)$ denote the set of density operators acting on the Hilbert space $\xx$, and $T(\xx,\xx')$ be the set of superoperators $\Phi:D(\xx)\rightarrow D(\xx')$. We use curly letters for denoting Hilbert spaces and Roman letters for denoting the corresponding subsystems.

Let $\Phi\in T(\ma,\ma')$.
For finite-dimensional systems, mutual information of the channel and state is defined by
\ba
I(\rho,\Phi)=H(A)+H(A')-H(E)
\label{eq:mutual_finite}
\ea
where $E$ is the output system of the complementary channel.
On the other hand, for infinite-dimensional systems, this definition may be ill-defined since von Neumann entropy can be infinite.
To overcome this subtlety, Holevo and Shirokov introduced the following definiton.
\begin{defn}[\cite{Holevo2010}]
For $\Phi\in T(\ma,\ma')$ and $\rho\in D(\ma)$, mutual information with respect to $\rho$ and $\Phi$ is defined by
 \begin{eqnarray}
 I(\rho,\Phi)\equiv H((\mathbb{1}\otimes \Phi)\dm{\psi}||\rho\otimes\Phi(\rho))
 \end{eqnarray}
where $\dm{\psi}$ is a purification of $\rho$ and $H(\cdot||\cdot)$ is the relative entropy.
\end{defn}

Note that when $\dim\ma<\infty$ and $\dim\ma'<\infty$, this definition reduces to \eq{mutual_finite}.

Another important quantity, especially relevant to quantum capacity of a channel, is the coherent information.
For finite-dimensional systems, the coherent information of channel $\Phi$ and state $\rho$ is defined by
\ba
 I_c(\rho,\Phi) = H(A')-H(RA')
 \label{eq:coherent_finite}
\ea
where $R$ is the system purifying $\rho$.
For infinite-dimensional systems, this definition may be ill-defined even for the state $\rho$ with the finite von Neumann entropy since the entropy of the output state can be infinite.
To remedy this, the following definition was introduced.

\begin{defn}[\cite{Holevo2010}]
For $\Phi\in T(\ma,\ma')$ and $\rho\in D(\ma)$, coherent information with respect to $\rho$ and $\Phi$ is defined by
 \ba
 I_c(\rho,\Phi)\equiv I(\rho,\Phi)-H(\rho)
 \label{eq:coh_info_def}
 \ea
 where $H(\cdot)$ is the von Neumann entropy.
\end{defn}

When $H(\rho)<\infty$ and $H(\Phi(\rho))<\infty$, this definition reduces to \eq{coherent_finite}.
Note that when $H(\rho)$ is finite, $I_c(\rho,\Phi)$ is finite for arbitary $\Phi$ because
\ba
 I(\rho,\Phi) = H(\mathbb{1}\otimes\Phi(\dm{\psi})||\mathbb{1}\otimes\Phi(\rho\otimes\rho))\leq H(\dm{\psi}||\rho\otimes\rho)
\ea
where we used the monotonicity of the relative entropy.

We consider the following coherent information obtained as the supremum over all the input states with energy constraint.
\begin{defn}
Let $\ma$ be an infinite-dimensional Hilbert space corresponding to the bosonic system with the Hamiltonian $H=\sum_{n=0}^{\infty} n\dm{n}$. 
Let $\Phi\in T(\ma,\ma')$, and define $\tilde{D}_h(\ma)=\{\rho\in D(\ma)\,|\,\Tr[\rho H]<h\}$. Then, we define the coherent information with input energy constraint $h$ as
 \ba
 \tI_{c,h}(\Phi)\equiv \sup_{\rho\in \tilde{D}_h(\ma)}I_c(\rho,\Phi)
 \label{eq:coherent info energy constraint}
 \ea
\end{defn}

For the case of finite input energy, the following important continuity property has been shown.
\begin{lem}[\cite{Holevo2010}]
 Let $\Phi\in T(\ma,\ma')$ and $\{\Phi_n\}$ be a sequence that strongly converges to $\Phi$. Then, for any sequence $\{\rho_n\}$ with $\forall n,\ \rho_n\in \tD_h(\ma)$ that converges to $\rho\in\tD_h(\ma)$, it holds that
 \ba
  \lim_{n\to\infty}I_c(\rho_n,\Phi_n) = I_c(\rho,\Phi)
 \ea
 \label{lemm:continuity}
for any $h<\infty$.
\end{lem}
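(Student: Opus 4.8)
\textbf{Proof plan for \lemm{continuity}.}
The statement is a continuity result for the coherent information jointly in the channel and the state under an energy constraint, so the natural route is to reduce it to the continuity of the mutual information $I(\rho,\Phi)=H((\1\otimes\Phi)\dm{\psi}\,\|\,\rho\otimes\Phi(\rho))$ and then subtract off $H(\rho)$, since by \eq{coh_info_def} we have $I_c(\rho_n,\Phi_n)=I(\rho_n,\Phi_n)-H(\rho_n)$. The plan is therefore: (i) show $H(\rho_n)\to H(\rho)$, (ii) show $I(\rho_n,\Phi_n)\to I(\rho,\Phi)$, and (iii) combine. Step (i) follows from lower semicontinuity of the von Neumann entropy together with its continuity on the set $\tD_h(\ma)$: the energy constraint $\Tr[\rho_nH]<h$ confines the states to a set on which $\rho\mapsto H(\rho)$ is continuous (this is the Gibbs/maximum-entropy argument — the entropy is bounded above by $g$ evaluated at the mean photon number and is continuous there because the Gibbs state at energy $h$ has finite entropy), so convergence $\rho_n\to\rho$ in trace norm forces $H(\rho_n)\to H(\rho)$.

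The core is step (ii). First I would fix purifications: let $\dm{\psi}$ purify $\rho$ and $\dm{\psi_n}$ purify $\rho_n$; since $\rho_n\to\rho$, one can choose the purifications so that $\ket{\psi_n}\to\ket{\psi}$, and the reduced states on the purifying system $R$ also satisfy the same energy-type control inherited from $\rho_n$. Then $(\1\otimes\Phi_n)\dm{\psi_n}\to(\1\otimes\Phi)\dm{\psi}$ in trace norm, using strong convergence $\Phi_n\to\Phi$ on the fixed (up to small perturbation) input together with the fact that strong convergence of channels implies convergence in trace norm on any fixed input state, plus continuity of $\Phi_n$ in the input. Likewise $\rho_n\otimes\Phi_n(\rho_n)\to\rho\otimes\Phi(\rho)$. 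Now the relative entropy $H(\cdot\,\|\,\cdot)$ is only lower semicontinuous, which gives $\liminf I(\rho_n,\Phi_n)\ge I(\rho,\Phi)$ for free; the matching $\limsup\le$ is the delicate direction. For that I would invoke the finiteness bound already noted in the excerpt, $I(\rho,\Phi)\le H(\dm{\psi}\|\rho\otimes\rho)=2H(\rho)$, together with an energy constraint on the combined state: the input states to the relative entropy live in a set of bounded energy (on the $AR$ system), and on such sets the relevant entropic quantities are continuous. Concretely, I would use the Holevo–Shirokov-type continuity bound for relative entropy / mutual information under energy constraints — the same machinery underlying the cited Ref.~\cite{Holevo2010} — which upgrades lower semicontinuity to genuine continuity once both arguments have uniformly bounded energy and the first argument has uniformly bounded entropy. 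This yields $I(\rho_n,\Phi_n)\to I(\rho,\Phi)$.

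The main obstacle is precisely this last point: relative entropy is not continuous in general, and one must exploit the energy constraint to get a two-sided estimate rather than just the easy $\liminf$ inequality. The cleanest way to package it is to write $I(\rho,\Phi)=H(\Phi(\rho))+H(\rho)-H((\1\otimes\Phi)\dm{\psi})=H(A')+H(R)-H(RA')$ whenever all three entropies are finite (which they are, since $H(\rho)<h$-constrained $\Rightarrow H(R)=H(\rho)<\infty$, and $H(A')=H(\Phi(\rho))$ is finite because $\Phi(\rho)$ has bounded energy for energy-limited input through a fixed physical channel, and then $H(RA')\le H(R)+H(A')<\infty$), and then apply the Fannes–Audenaert/Winter-type continuity of von Neumann entropy under energy constraints to each of the three terms separately. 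Combining with step (i), the $H(\rho_n)$ contributions cancel in $I_c=I-H(\rho)$, leaving $I_c(\rho_n,\Phi_n)=H(\Phi_n(\rho_n))-H(R_nA_n')\to H(\Phi(\rho))-H(RA')=I_c(\rho,\Phi)$, which is the claim. I would remark that the uniform energy bound on $\Phi_n(\rho_n)$ needs a small argument: strongly convergent physical channels applied to an energy-bounded family produce an energy-bounded family, which one checks on the level of the second moments / photon-number expectation.
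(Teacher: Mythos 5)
The first thing to note is that the paper does not prove this lemma at all: it is imported verbatim from Holevo and Shirokov~\cite{Holevo2010} (hence the citation in the lemma header), so there is no in-paper proof to compare against. What follows therefore assesses your sketch on its own terms.

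Your step (i) ($H(\rho_n)\to H(\rho)$ on $\tD_h(\ma)$, via continuity of the entropy on energy-bounded sets for the number-operator Hamiltonian) and the $\liminf$ half of step (ii) (joint lower semicontinuity of the relative entropy together with trace-norm convergence of $(\1\otimes\Phi_n)\dm{\psi_n}$ and $\rho_n\otimes\Phi_n(\rho_n)$) are sound. The genuine gap is the $\limsup$ direction. As written it is circular --- you ``invoke the Holevo--Shirokov-type continuity bound \dots the same machinery underlying the cited Ref.'' --- and your concrete fallback, writing $I=H(A')+H(R)-H(RA')$ and applying energy-constrained entropy continuity to each term, fails at the stated level of generality: the lemma concerns an \emph{arbitrary} $\Phi\in T(\ma,\ma')$ with only the \emph{input} energy constrained. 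The output space $\ma'$ carries no Hamiltonian and the channel need not be energy-limited, so $H(\Phi_n(\rho_n))$ can be infinite, and even when finite there is no uniform control of it along the sequence; this is precisely why the relative-entropy definition of $I(\rho,\Phi)$ is used in the first place. The paper's subsequent applications of the lemma (to the truncated channels $\Pi_n\circ\Phi$ and to tensor products with infinite-dimensional outputs) really do require this generality, so the assumption you smuggle in (``bounded output energy for a fixed physical channel'') cannot be granted. The missing idea --- and the way Holevo--Shirokov actually prove the result --- is the complementarity identity $I(\rho,\Phi)+I(\rho,\hat{\Phi})=2H(\rho)$, with $\hat{\Phi}$ a complementary channel: both summands are lower semicontinuous under the stated convergences (for the second one needs that complementary channels can be chosen to converge strongly along with $\Phi_n$, via continuity of the Stinespring dilation), while their sum $2H(\rho)$ is continuous on $\tD_h(\ma)$ by your step (i). Two lower semicontinuous functions with continuous sum are each continuous, so $I(\rho_n,\Phi_n)\to I(\rho,\Phi)$, and subtracting $H(\rho_n)\to H(\rho)$ gives the claim. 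Replacing your $\limsup$ argument with this sandwich would close the proof.
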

For finite-dimensional channels consisting of an entanglement-breaking channel and an arbitrary channel, the following additivity result holds.
We include the proof of this result for completeness.

\begin{lem}[\cite{Watrous_app, Elton_app}]
 Let $\Phi_{\rm EB}\in T(\ma,\ma')$ be an entanglement-breaking channel and $\Psi\in T(\mb,\mb')$ be an arbitrary channel where $\dim\ma<\infty,\,\dim\ma'<\infty,\,\dim\mb<\infty,\,\dim\mb'<\infty$. Then,
 \ba
  I_c(\Phi_{\rm EB}\otimes\Psi)=I_c(\Psi)
 \ea
 \label{lemm:add_finite_EB}
\end{lem}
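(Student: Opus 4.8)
The plan is to prove the two inequalities $I_c(\Phi_{\rm EB}\otimes\Psi)\ge I_c(\Psi)$ and $I_c(\Phi_{\rm EB}\otimes\Psi)\le I_c(\Psi)$ separately, where $I_c(\Phi):=\sup_\rho I_c(\rho,\Phi)$. The first is immediate from additivity of the coherent information on product states under product channels, $I_c(\rho\otimes\sigma,\Phi_{\rm EB}\otimes\Psi)=I_c(\rho,\Phi_{\rm EB})+I_c(\sigma,\Psi)$ (both the channel output and the complementary output factorize over the tensor product), together with the fact that $I_c(\rho,\Phi_{\rm EB})=0$ whenever $\rho$ is pure (a pure input produces a pure state on output-plus-environment, so $H(A')=H(E)$); optimizing over $\sigma$ then yields $I_c(\Phi_{\rm EB}\otimes\Psi)\ge I_c(\Psi)$. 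The whole content is therefore the reverse inequality.

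For that I would use the measure-and-prepare structure of an entanglement-breaking channel, writing $\Phi_{\rm EB}=\mathcal P\circ\mathcal M$ with a ``measurement'' channel $\mathcal M\in T(\ma,\cl{C})$, $\mathcal M(\rho)=\sum_c\langle v_c|\rho|v_c\rangle\,\dm{c}_{C}$, into a classical register $C$, followed by a ``preparation'' channel $\mathcal P\in T(\cl{C},\ma')$, $\mathcal P(\dm{c}_{C})=\rho_c$. The crucial observation is that, by putting each POVM element in a rank-one (e.g.\ eigenvalue) decomposition and absorbing the extra index into the preparation map, one may always take the POVM $\{\dm{v_c}\}$ of $\mathcal M$ to consist of rank-one elements. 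Since $\Phi_{\rm EB}\otimes\Psi=(\mathcal P\otimes\mathbb{1}_{\mb'})\circ(\mathcal M\otimes\Psi)$ and the coherent information cannot increase under post-processing of the channel output (the data-processing inequality for the quantum mutual information, unambiguous here because all systems are finite-dimensional), one gets $I_c(\tau,\Phi_{\rm EB}\otimes\Psi)\le I_c(\tau,\mathcal M\otimes\Psi)$ for every $\tau\in D(\ma\otimes\mb)$, so it suffices to bound $I_c(\tau,\mathcal M\otimes\Psi)$ by $I_c(\Psi)$.

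To bound it I would compute $I_c(\tau,\mathcal M\otimes\Psi)$ directly. Purifying $\tau$ with a reference $R$ and applying $\mathcal M\otimes\Psi$, the output on $RCB'$ equals $\sum_c p_c\,\dm{c}_{C}\otimes(\mathbb{1}_R\otimes\Psi)(\dm{\hat\phi_c})$, where $\ket{\hat\phi_c}_{RB}\propto\langle v_c|_A\ket{\chi}_{RAB}$ is the normalized conditional state and $p_c$ its probability. Since $C$ is classical, $H(RCB')$ and $H(CB')$ each split into $H(\{p_c\})$ plus the corresponding conditional average; the $H(\{p_c\})$ terms cancel in $I_c=H(CB')-H(RCB')$, leaving $I_c(\tau,\mathcal M\otimes\Psi)=\sum_c p_c\bigl[H(\Psi(\hat\sigma_c))-H((\mathbb{1}_R\otimes\Psi)(\dm{\hat\phi_c}))\bigr]$ with $\hat\sigma_c=\Tr_R\dm{\hat\phi_c}$. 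Here the rank-one refinement pays off: each $\ket{\hat\phi_c}_{RB}$ is \emph{pure}, hence a purification of $\hat\sigma_c\in D(\mb)$, so the bracket equals $I_c(\hat\sigma_c,\Psi)$ by the finite-dimensional definition of the coherent information. Convexity of the supremum then gives $I_c(\tau,\mathcal M\otimes\Psi)=\sum_c p_c\,I_c(\hat\sigma_c,\Psi)\le\sup_\sigma I_c(\sigma,\Psi)=I_c(\Psi)$, and taking the supremum over $\tau$ finishes the proof.

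The point I would flag as the real subtlety is exactly why one cannot skip the rank-one refinement: for a general (higher-rank) POVM the conditional states $\hat\phi_c$ are only mixed extensions of $\hat\sigma_c$, and the corresponding quantity $H(\Psi(\hat\sigma_c))-H((\mathbb{1}\otimes\Psi)(\hat\phi_c))$ — a ``coherent information with a mixed reference'' — can strictly exceed $I_c(\hat\sigma_c,\Psi)$ (a pure reference is not always optimal), so the averaging argument would break down. Once the rank-one measure-and-prepare form is secured, the remaining ingredients — data processing, the classical--quantum entropy splitting, and convexity of the supremum — are routine, and finite-dimensionality is used only to keep every entropy and the data-processing step well defined.
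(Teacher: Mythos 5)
Your proof is correct, but it takes a genuinely different route from the paper's. The paper never invokes the measure-and-prepare form of $\Phi_{\rm EB}$: it works directly with the defining property that $\sigma=\mathbb{1}_{RB}\otimes\Phi_{\rm EB}(\dm{\psi})$ is separable across $A'$ versus $RB$, writes $\sigma=\sum_y p_y\,\sigma_y^{A'}\otimes\sigma_y^{RB}$ with \emph{pure} factors, appends a classical flag $R'$ recording $y$, and uses strong subadditivity in the form $-H(R|A'B')\le -H(R|R'A'B')$ before evaluating the conditional entropies block by block to arrive at $\sum_y p_y I_c(\sigma_y^B,\Psi)\le I_c(\Psi)$. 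You instead decompose the channel itself as a preparation following a rank-one measurement, strip off the preparation by data processing, and compute the coherent information of the resulting classical--quantum output state directly. The two arguments hinge on the same essential fact --- the conditional states on reference-plus-$\mb$ can be taken pure (purity of $\sigma_y^{RB}$ in the paper's separable decomposition, rank-one POVM elements in yours), so that each term in the average is a genuine $I_c(\cdot,\Psi)$ --- and the subtlety you flag about mixed references is the exact counterpart of the paper's need for an all-pure product decomposition. Your version makes the classical register concrete (it is literally the measurement outcome) and buries strong subadditivity inside the data-processing inequality; the paper's version needs only separability of the output state rather than an explicit measure-and-prepare representation, though in finite dimensions the two are equivalent. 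Both are complete proofs of the lemma.
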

\begin{proof}
 Since the quantum capacity of any entanglement-breaking channel is zero due to the anti-degradablility of the entanglement-breaking channels and the non-cloning theorem, $I_c(\Phi_{\rm EB})=0$.
 $I_c(\Phi_{\rm EB}\otimes \Psi) \geq I_c(\Psi)$ is trivial, so it suffices to show $I_c(\Phi_{\rm EB}\otimes \Psi) \leq I_c(\Psi)$
 When input space and output space are finite-dimensional, the expression of coherent information of channel $\Phi_{\rm EB}\in T(\xx,\xx')$ and $\rho\in D(\xx)$ reduces to
 \ba
  I_c(\rho,\Phi_{\rm EB})= - H((\mathbb{1}\otimes\Phi_{\rm EB})\dm{\psi}_{RX}) + H(\Phi_{\rm EB}(\rho)) = -H(R|X')_{\mathbb{1}\otimes\Phi_{\rm EB}(\dm{\psi})}
 \ea
 where $\ket{\psi}\in \mathcal{R}\otimes \xx$ is a pure state purifying $\rho$, $R$ is a reference system for the purification, and $H(\cdot|\cdot)$ is the conditional entropy.

 Now, we consider $I_c(\rho,\Phi_{EB}\otimes\Psi)$ where $\rho\in D(\ma\otimes\mb)$.
 Let $\dm{\psi}\in  D(\mr\otimes\ma\otimes\mb)$ be a pure state purifying $\rho$, and define $\sigma = \mathbb{1}_{RB}\otimes\Phi_{\rm EB}(\dm{\psi})$.
 Since $\Phi_{\rm EB}$ is entanglement breaking, $\sigma$ can be written as $\sigma = \sum_y p_y\, \sigma^{A'}_y\otimes\sigma^{RB}_y$ for some probability distribution $\{p_y\}$ and pure states $\sigma_y^{A'}$, $\sigma_y^{RB}$.
 Define $\tau = \sum_y p_y \dm{y}_{R'}\otimes \sigma^{A'}_y\otimes\sigma^{RB}_y$ where we introduced another system $R'$.
 Then, we get
 \ba
  I_c(\rho,\Phi_{\rm EB}\otimes \Psi) &=& - H(R|A'B')_{\mathbb{1}_{RA'}\otimes\Psi(\sigma)}\\
  &\leq& - H(R|R'A'B')_{\mathbb{1}_{R'RA'}\otimes\Psi(\tau)}\\
  &=& -\left[H(R'RA'B') - H(R'A'B')\right]_{\mathbb{1}_{R'RA'}\otimes\Psi(\tau)}\\
  &=& - \left[H(RA'B'|R') - H(A'B'|R')\right]_{\mathbb{1}_{R'RA'}\otimes\Psi(\tau)}\\
  &=& -\sum_y p_y [H(RA'B')-H(A'B')]_{\sigma_y^{A'}\otimes [\mathbb{1}_R\otimes\Psi(\sigma_y^{RB})]}\\
  &=& -\sum_y p_y [H(RB')-H(B')]_{\mathbb{1}_R\otimes\Psi(\sigma_y^{RB})}\\
  &=& -\sum_y p_y H(R|B')_{\mathbb{1}_{R}\otimes\Psi(\sigma_y^{RB})}\\
  &=& \sum_y p_y I_c(\sigma_y^B,\Psi)\\
  &\leq& I_c(\Psi)
 \ea
where the first inequality is due to the strong subadditivity of the von Neumann entropy.
\end{proof}
In Ref.\,\cite{Shirokov2006}, the authors defined the Holevo capacity for infinite-dimensional channels and showed the additivity of the Holevo capacity of the channels consisting of an entanglement-breaking channel and an arbitrary channel.
Here, we basically apply their argument to the coherent information although there are some differences.
First difference is that the coherent information is continuous whereas Holevo $\chi$-function is only lower semicontinuous, which makes our analysis on the coherent information easier.
Second difference is that the $\chi$-function satisfies the following property
\ba
 \chi(\rho,\Phi_{\rm EB}\otimes\Psi) \leq \chi(\rho_A,\Phi_{\rm EB}) + \chi(\rho_B,\Psi),\ \forall \rho
\ea
for finite-dimensional channels while it is not clear whether the corresponding relation holds for the coherent information due to the lack of concavity with respect to the input state.
Thus, we need a slightly different analysis.

Let $\Phi\in T(\ma,\ma')$, and $P_n$ be a finite-rank projector acting on $\ma'$ such that $\lim_{n\to\infty}P_n = \mathbb{1}_{\ma'}$.
 Let $\ma_n'$ be a finite-dimensional subspace of $\ma'$ defined by $\ma_n'=P_n(\ma')$.
 Let us take another finite-dimensional subspace $\ma_n''\subset {\ma_n'}^{\perp}\subset\ma'$ and some pure state $\tau_n\in D(\ma_n'')$.
 Consider a sequence of channels $\Phi_n\in T(\ma,\ma'_n\oplus\ma_n'')$ defined by
 \ba
  \Phi_n(\cdot) = P_n\Phi(\cdot)P_n + \Tr[(\mathbb{1}_{\ma'}-P_n)\Phi(\cdot)]\tau_n.
  \label{eq:F_channel}
 \ea
Since $\lim_{n\to\infty}\Phi_n(\rho)=\Phi(\rho), \forall \rho\in D(\ma)$, the sequence $\{\Phi_n\}$ strongly converges to $\Phi$.
Note that $\Phi_n=\Pi_n\circ \Phi$ where $\Pi_n\in T(\ma',\ma'_n\oplus\ma''_n)$ is a channel defined by
\ba
  \Pi_n(\cdot) = P_n\cdot P_n + \Tr[(\mathbb{1}_{\ma'}-P_n)\cdot]\tau_n.
  \label{eq:Pi_channel}
\ea
Using these sequences of channels, we obtain the following. 
\begin{lem}
 Let $\Phi\in T(\ma,\ma')$ be a channel with $\dim\ma<\infty$, $\dim\ma'\leq\infty$, and $\Psi\in T(\mb,\mb')$ be a channel with $\dim\mb\leq\infty,\,\dim\mb'\leq\infty$.
 Define $\tD_{h_\mb}(\ma\otimes\mb)$ as the set of states whose reduced states acting on $\mb$ have the mean energy less than $h$.
 Then, for all $h<\infty$, if $\Phi_n$ defined by \eq{F_channel} satisfies
 \ba
  I_c(\rho,\Phi_n\otimes \Psi) \leq  I_c(\Phi_n)+\tI_{c,h}(\Psi),\ \forall \rho\in \tD_{h_\mb}(\ma\otimes\mb)
  \label{eq:ma_fin_assumption}
 \ea
 for all $n\in\mathbb{N}$, it holds that
 \ba
  I_c(\rho,\Phi\otimes \Psi) \leq  I_c(\Phi)+\tI_{c,h}(\Psi),\ \forall \rho\in \tD_{h_\mb}(\ma\otimes\mb).
 \ea
 \label{lemm:ma_fin}
\end{lem}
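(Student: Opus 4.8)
\emph{Proof strategy.} The plan is to pass to the limit $n\to\infty$ in the hypothesized inequality \eq{ma_fin_assumption}, exploiting the strong convergence $\Phi_n\to\Phi$ together with the energy-constrained continuity of the coherent information (\lemm{continuity}). The only genuinely new point is to check that the continuity statement applies on the composite input system $\ma\otimes\mb$, which is where the assumption $\dim\ma<\infty$ enters; the rest is bookkeeping.

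First I would set up the energy-constrained framework on $\ma\otimes\mb$. Since $\dim\ma<\infty$, the Hilbert space $\ma\otimes\mb$ is isomorphic to a finite direct sum of copies of $\mb$, and equipping it with the Hamiltonian $H_{\ma\otimes\mb}=\1_\ma\otimes H_\mb$, where $H_\mb=\sum_{n}n\dm{n}$ is the number Hamiltonian on $\mb$, yields a valid Gibbs-type Hamiltonian because $\T\,e^{-\beta H_{\ma\otimes\mb}}=(\dim\ma)\,\T\,e^{-\beta H_\mb}<\infty$ for all $\beta>0$. With this choice $\T[\rho\,H_{\ma\otimes\mb}]=\T[\rho_\mb H_\mb]$, so $\tD_{h_\mb}(\ma\otimes\mb)$ coincides with the set $\tD_h(\ma\otimes\mb)$ appearing in \eq{coherent info energy constraint}, and \lemm{continuity} becomes applicable to sequences of channels on $\ma\otimes\mb$ with this fixed finite energy bound $h$.

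Next I would observe that $\Phi_n\otimes\Psi$ strongly converges to $\Phi\otimes\Psi$: since $\{\Phi_n\}$ strongly converges to $\Phi$ (as noted after \eq{F_channel}), $\Phi_n\otimes\mathrm{id}_\mb$ strongly converges to $\Phi\otimes\mathrm{id}_\mb$, and post-composing with the fixed channel $\mathrm{id}_{\ma'}\otimes\Psi$ preserves strong convergence because the trace norm is contractive under channels. Fixing any $\rho\in\tD_{h_\mb}(\ma\otimes\mb)$ and applying \lemm{continuity} to the constant sequence $\rho_n\equiv\rho$ and the channel sequence $\{\Phi_n\otimes\Psi\}$ then gives $\lim_{n\to\infty}I_c(\rho,\Phi_n\otimes\Psi)=I_c(\rho,\Phi\otimes\Psi)$. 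For the right-hand side of \eq{ma_fin_assumption}, $\tI_{c,h}(\Psi)$ is independent of $n$, and I would bound $I_c(\Phi_n)$ using $\Phi_n=\Pi_n\circ\Phi$ with $\Pi_n$ a channel: applying $\1\otimes\Pi_n$ to both arguments of the relative entropy defining $I(\rho,\Phi)$ and using its monotonicity gives $I(\rho,\Phi_n)\le I(\rho,\Phi)$; since $\dim\ma<\infty$ forces $H(\rho)<\infty$, subtracting $H(\rho)$ via \eq{coh_info_def} yields $I_c(\rho,\Phi_n)\le I_c(\rho,\Phi)$ for every $\rho\in D(\ma)$, hence $I_c(\Phi_n)\le I_c(\Phi)$ for all $n$. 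Combining, $I_c(\rho,\Phi\otimes\Psi)=\lim_n I_c(\rho,\Phi_n\otimes\Psi)\le\limsup_n\bigl(I_c(\Phi_n)+\tI_{c,h}(\Psi)\bigr)\le I_c(\Phi)+\tI_{c,h}(\Psi)$, and since $\rho\in\tD_{h_\mb}(\ma\otimes\mb)$ was arbitrary the claim follows.

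I expect the main obstacle to be the careful verification that \lemm{continuity} genuinely applies in this mixed finite/infinite-dimensional setting — in particular, confirming that constraining the energy of only the $\mb$-factor suffices to invoke the Holevo–Shirokov continuity result on $\ma\otimes\mb$ (this is precisely where $\dim\ma<\infty$ is indispensable) and that $\Phi_n\otimes\Psi$ converges strongly in the required sense. Once those two points are secured, the post-processing bound $I_c(\Phi_n)\le I_c(\Phi)$ and the final limiting argument are routine.
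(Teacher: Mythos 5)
Your proposal is correct and follows essentially the same route as the paper: bound $I_c(\Phi_n)\le I_c(\Phi)$ via the post-processing relation $\Phi_n=\Pi_n\circ\Phi$ and monotonicity, then pass to the limit $n\to\infty$ using the strong convergence $\Phi_n\otimes\Psi\to\Phi\otimes\Psi$ and the energy-constrained continuity lemma applied to the constant state sequence. Your version is marginally more explicit (you avoid invoking compactness of $D(\ma)$ by bounding $I_c(\rho,\Phi_n)\le I_c(\rho,\Phi)$ pointwise, and you spell out why the continuity lemma applies on $\ma\otimes\mb$ with the energy constraint only on the $\mb$ factor), but these are refinements of the same argument, not a different one.
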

\begin{proof}
By the assumption \eq{ma_fin_assumption} and the compactness of $D(\ma)$, for any $n\in\mathbb{N}$ and $\rho\in\tD_{h_\mb}(\ma\otimes\mb)$, there exists $\sigma_n\in D(\ma)$ such that
 \ba
  I_c(\rho,\Phi_n\otimes\Psi) \leq I_c(\sigma_n,\Phi_n) + \tI_{c,h}(\Psi).
 \ea
Since $\Phi_n=\Pi_n\circ\Phi$ where $\Pi_n$ is defined by \eq{Pi_channel}, due to the monotonicity of the coherent information, we get $I_c(\sigma_n,\Phi_n)\leq I_c(\sigma_n,\Phi)$.
Combining the inequality $I_c(\sigma_n,\Phi)\leq I_c(\Phi)$, we get
 \ba
  I_c(\rho,\Phi_n\otimes\Psi) \leq I_c(\Phi) + \tI_{c,h}(\Psi).
 \ea
 Since $\lim_{n\to\infty}\Phi_n\otimes\Psi=\Phi\otimes\Psi$, the statement is obtained by using \lemm{continuity}.
\end{proof}

We next define subchannels, which are the channels with restricted input subspace. 
\begin{defn}
 The subchannel of $\Phi\in T(\ma,\ma')$ constrained on $\ma_0$, which is denoted by $\Phi_{\ma_0}$, is the channel in $T(\ma_0,\ma')$ where inputs are constrained to the set of states with support contained in a subspace $\ma_0\subset \ma$.
\end{defn}
Then, we obtain the following lemma.
\begin{lem}
Let $\Phi\in T(\ma,\ma')$ be a channel with $\dim\ma\leq\infty,\,\dim\ma'\leq\infty$ and $\Psi\in T(\mb,\mb')$ be a channel with $\dim\mb\leq\infty,\,\dim\mb'\leq\infty$.
Define $\tD_{h_\ma,h'_\mb}(\ma\otimes\mb)$ as the set of states whose reduced states acting on $\ma$ ($\mb$) is less than $h$ (h').
For any $h<\infty$ and $h'<\infty$, if it holds
\ba
 I_c(\rho,\Phi\otimes\Psi_{\ma_0\otimes\mb_0}) \leq \tI_{c,h}(\Phi_{\ma_0}) + \tI_{c,h'}(\Psi_{\mb_0}),\ \forall \rho\in \tD_{h_\ma,h'_\mb}(\ma\otimes\mb)
 \label{eq:all_infinite_assumption}
\ea
for any choice of $\ma_0\subset \ma$ and $\mb_0\subset\mb$ with $\dim\ma_0<\infty,\,\dim\mb_0<\infty$, then
\ba
 I_c(\rho,\Phi\otimes\Psi) \leq \tI_{c,h}(\Phi) + \tI_{c,h'}(\Psi),\ \forall \rho\in \tD_{h_\ma,h'_\mb}(\ma\otimes\mb).
\ea

 \label{lemm:all_infinite}
\end{lem}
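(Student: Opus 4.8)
The plan is to reduce the claim to its hypothesis \eq{all_infinite_assumption} on finite-dimensional input subspaces by approximating an arbitrary admissible input $\rho$ from within by finite-rank states that still respect the energy constraints, applying \eq{all_infinite_assumption} to each, and then passing to the limit via the continuity of the coherent information (\lemm{continuity}).

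Concretely, I would fix $\rho\in\tD_{h_\ma,h'_\mb}(\ma\otimes\mb)$ and choose finite-rank spectral projectors $Q_n$ of $H$ on $\ma$ with $Q_n\uparrow\mathbb{1}$ and $[Q_n,H]=0$, and likewise $Q'_n$ on $\mb$; set $P_n=Q_n\otimes Q'_n$ and, for $n$ large enough that $\Tr[P_n\rho]>0$, define $\rho_n=P_n\rho P_n/\Tr[P_n\rho]$. Since $P_n\uparrow\mathbb{1}$ strongly and $\rho$ is trace class, $\rho_n\to\rho$ in trace norm and $\Tr[P_n\rho]\to1$. The point to check is that the normalized truncations do not escape the open energy balls: using $[Q_n,H]=0$ and cyclicity, $\Tr[(\rho_n)_\ma H]=\Tr[\rho\,(Q_nHQ_n\otimes Q'_n)]/\Tr[P_n\rho]$, and the operator inequalities $0\le Q_nHQ_n\le H$, $0\le Q'_n\le\mathbb{1}$ give $Q_nHQ_n\otimes Q'_n\le H\otimes\mathbb{1}$, whence $\Tr[(\rho_n)_\ma H]\le\Tr[\rho_\ma H]/\Tr[P_n\rho]\to\Tr[\rho_\ma H]<h$; thus $\Tr[(\rho_n)_\ma H]<h$ for all $n$ beyond some $N$, and symmetrically $\Tr[(\rho_n)_\mb H]<h'$, so $\rho_n\in\tD_{h_\ma,h'_\mb}(\ma\otimes\mb)$ for $n\ge N$.

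For $n\ge N$, put $\ma_0=Q_n(\ma)$ and $\mb_0=Q'_n(\mb)$, both finite-dimensional, and observe that $\rho_n$ is supported on $\ma_0\otimes\mb_0$; since the coherent information depends only on the action of the channel on the input's support (and the purifying reference may be taken finite-dimensional), $I_c(\rho_n,\Phi\otimes\Psi)=I_c(\rho_n,(\Phi\otimes\Psi)_{\ma_0\otimes\mb_0})$. Applying the hypothesis \eq{all_infinite_assumption} with these $\ma_0,\mb_0$ and input $\rho_n$, and using $\tD_h(\ma_0)\subseteq\tD_h(\ma)$, $\tD_{h'}(\mb_0)\subseteq\tD_{h'}(\mb)$ together with $I_c(\cdot,\Phi_{\ma_0})=I_c(\cdot,\Phi)$ on states supported in $\ma_0$ (and similarly for $\Psi$), we get $I_c(\rho_n,\Phi\otimes\Psi)\le\tI_{c,h}(\Phi_{\ma_0})+\tI_{c,h'}(\Psi_{\mb_0})\le\tI_{c,h}(\Phi)+\tI_{c,h'}(\Psi)$ for every $n\ge N$. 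Finally, viewing $\ma\otimes\mb$ as a bosonic system with Hamiltonian $H\otimes\mathbb{1}+\mathbb{1}\otimes H$ (to which \lemm{continuity} applies with energy bound $h+h'$), we have $\rho,\,\rho_n\in\tD_{h+h'}(\ma\otimes\mb)$ and $\rho_n\to\rho$, so \lemm{continuity} with the constant sequence $\Phi_n=\Phi\otimes\Psi$ yields $I_c(\rho,\Phi\otimes\Psi)=\lim_{n\to\infty}I_c(\rho_n,\Phi\otimes\Psi)\le\tI_{c,h}(\Phi)+\tI_{c,h'}(\Psi)$, and since $\rho$ was arbitrary this is the statement. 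I expect the one genuinely delicate point to be the energy bookkeeping of the truncation step -- guaranteeing $\rho_n$ stays in $\tD_{h_\ma,h'_\mb}$ -- which is exactly why the truncating projectors must commute with $H$; the rest is a routine monotonicity-plus-continuity argument.
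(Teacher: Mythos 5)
Your proof is correct and follows essentially the same route as the paper's: truncate $\rho$ with finite-rank projectors $Q_n\otimes Q'_n$, apply the finite-dimensional hypothesis to the truncated state, remove the subchannel restriction by noting $I_c(\cdot,\Phi_{\ma_0})=I_c(\cdot,\Phi)$ on states supported in $\ma_0$, and pass to the limit with \lemm{continuity}. Your additional care in choosing spectral projectors of $H$ to verify that the normalized truncations $\rho_n$ remain in $\tD_{h_\ma,h'_\mb}$ addresses a point the paper's proof leaves implicit, and is a genuine (and welcome) tightening rather than a different argument.
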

\begin{proof}
 Consider the sequence of states
 \ba
  \rho_n = \left(\Tr[(P_n\otimes Q_n)\rho]\right)^{-1}(P_n\otimes Q_n)\,\rho\,(P_n\otimes Q_n)
 \ea
 where $P_n$ and $Q_n$ be finite-rank projectors acting on $\ma$ and $\mb$ such that $\lim_{n\to\infty}P_n=\mathbb{1}_{\ma}$ and $\lim_{n\to\infty}Q_n=\mathbb{1}_{\mb}$.
 Let $\Phi_{P_n}$ and $\Psi_{Q_n}$ be $P_n(\ma),Q_n(\mb)$-constrained channels.
 By assumption \eq{all_infinite_assumption}, for any $n\in\mathbb{N}$ and $\rho\in \tD_{h_\ma,h'_\mb}(\ma\otimes\mb)$, there exist $\sigma_n\in \tD_h(P_n(\ma))$ and $\tau_n\in \tD_{h'}(Q_n(\mb))$ such that
 \ba
  I_c(\rho_n,\Phi\otimes \Psi_{P_n(\ma)\otimes Q_n(\mb)}) \leq I_c(\sigma_n,\Phi_{P_n}) + I_c(\tau_n,\Psi_{Q_n})
 \ea
 where we used the compactness of $\tD_h(P_n(\ma))$ and $\tD_{h'}(Q_n(\mb))$.
 Since $\Phi_{P_n}$ and $\Psi_{Q_n}$ are just original channels with input restrictions, we get
 \ba
  I_c(\sigma_n,\Phi_{P_n}) &=& I_c(\sigma_n,\Phi)\leq \tI_{c,h}(\Phi)\\
  I_c(\tau_n,\Psi_{Q_n}) &=& I_c(\tau_n,\Psi)\leq \tI_{c,h'}(\Psi).
 \ea
 Since $\rho_n\rightarrow \rho$, $P_n\rightarrow \mathbb{1}_{\ma}$, $Q_n\rightarrow \mathbb{1}_{\mb}$, taking $n\to\infty$ and using \lemm{continuity}, we reach the statement.
\end{proof}

We finally reach our main result.
\begin{thm}
Let $\Phi\in T(\ma,\ma')$ be an entanglement-breaking channel with $\dim\ma\leq\infty,\,\dim\ma'\leq\infty$ and $\Psi\in T(\mb,\mb')$ be an arbitrary channel with $\dim\mb\leq\infty,\,\dim\mb'\leq\infty$. 
In a similar way to \eqref{eq:coherent info energy constraint}, define $\tI_{c,h_\ma,h'_\mb}(\Phi\otimes \Psi)$ as the coherent information obtained by taking the supremum over the states whose reduced states acting on $\ma$ ($\mb$) has the mean energy less than $h$ ($h'$). Then, for any $h<\infty$ and $h'<\infty$,
 \ba
  \tI_{c,h_\ma,h'_\mb}(\Phi\otimes \Psi) = \tI_{c,h'}(\Psi)
 \ea
\end{thm}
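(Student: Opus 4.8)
My plan is to prove the two inequalities separately, the reverse one carrying essentially all of the work.

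\emph{The inequality ``$\ge$''.} Pick a pure state $\phi$ on $\ma$ of finite energy (the vacuum if $\ma$ is bosonic, any unit vector if $\dim\ma<\infty$) and an arbitrary $\rho_B\in\tD_{h'}(\mb)$. Because the relative entropy and the von Neumann entropy are additive on tensor products, $I(\phi\otimes\rho_B,\Phi\otimes\Psi)=I(\rho_B,\Psi)$ and $H(\phi\otimes\rho_B)=H(\rho_B)$, hence $I_c(\phi\otimes\rho_B,\Phi\otimes\Psi)=I_c(\rho_B,\Psi)$; since $\phi\otimes\rho_B$ lies in the admissible set, taking the supremum over $\rho_B$ gives $\tI_{c,h_\ma,h'_\mb}(\Phi\otimes\Psi)\ge\tI_{c,h'}(\Psi)$.

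\emph{The inequality ``$\le$''.} I would run the dimension-reduction ladder assembled from \lemm{all_infinite}, \lemm{ma_fin}, \lemm{add_finite_EB} and \lemm{continuity}. The structural fact that makes everything collapse is that the entanglement-breaking property is hereditary: every subchannel $\Phi_{\ma_0}$ of $\Phi$ is entanglement-breaking, and so is every composition $\Pi_n\circ\Phi_{\ma_0}$ of it with the finite-rank output truncations $\Pi_n$ of \eq{Pi_channel}; such channels are antidegradable, so $I_c(\rho,\cdot)\le 0$ for every input and therefore $\tI_{c,h}(\cdot)=0$. Hence every bound of the shape ``$\le I_c(\text{truncated EB channel})+\tI_{c,h'}(\text{truncated }\Psi)$'' produced along the way reduces to ``$\le\tI_{c,h'}(\text{truncated }\Psi)$''. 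Concretely: \lemm{all_infinite} reduces the target bound $I_c(\rho,\Phi\otimes\Psi)\le\tI_{c,h'}(\Psi)$ to $I_c(\rho,\Phi_{\ma_0}\otimes\Psi_{\mb_0})\le\tI_{c,h}(\Phi_{\ma_0})+\tI_{c,h'}(\Psi_{\mb_0})=\tI_{c,h'}(\Psi_{\mb_0})$ for all finite-dimensional $\ma_0\subset\ma$, $\mb_0\subset\mb$, the reduction using compactness of energy-constrained state sets together with \lemm{continuity}. Since $\Phi_{\ma_0}$ now has finite-dimensional input, \lemm{ma_fin} (with $\Phi_{\ma_0}$ as first factor and $\Psi_{\mb_0}$ as arbitrary second factor) reduces this to the same bound with $\Phi_{\ma_0}$ replaced by its finite-rank truncations $(\Phi_{\ma_0})_n$; a symmetric application of \lemm{ma_fin} — in a variant whose conclusion retains the energy bound on the first factor, proved identically but invoking compactness of $\tD_{h'}$ rather than of $D(\mb)$ — does the same for $\Psi_{\mb_0}$. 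The bottom rung then has both factors finite-dimensional and one of them entanglement-breaking, where \lemm{add_finite_EB} applies; feeding its output back up the ladder through \lemm{continuity} yields $I_c(\rho,\Phi\otimes\Psi)\le\tI_{c,h'}(\Psi)$ for every admissible $\rho$, which with the first step is the theorem.

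\emph{The main obstacle.} The delicate point — and the reason the ladder is needed instead of a one-line estimate — is keeping the finite-energy constraint alive across the finite-/infinite-dimensional passage. The finite-dimensional additivity of \lemm{add_finite_EB} naturally yields an \emph{unconstrained} supremum $I_c(\cdot)$ of $\Psi$ on the right, whereas the target is the \emph{energy-constrained} $\tI_{c,h'}(\Psi)$; to match them one must run the truncation limits through \lemm{continuity} with inputs confined to a fixed energy ball and exploit that the truncation channels are post-processings, so that coherent information is monotone under them. The root of the difficulty is visible inside \lemm{add_finite_EB}: the entanglement-breaking factor turns the global state into a separable ensemble $\{p_y,\sigma^{A'}_y\otimes\sigma^{RB}_y\}$, reorganizing the coherent information as $\sum_y p_y I_c(\sigma^B_y,\Psi)$ with $\sum_y p_y\sigma^B_y$ equal to the $B$-marginal of the input and hence of mean energy below $h'$; since $\rho\mapsto I_c(\rho,\Psi)$ fails to be concave, this convex combination cannot simply be collapsed onto a single constrained input, and circumventing precisely this obstruction is the purpose of the compactness-and-continuity machinery in \lemm{ma_fin} and \lemm{all_infinite}.
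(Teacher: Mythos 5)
Your proposal is correct and follows essentially the same route as the paper: the trivial product-state lower bound, then the reduction ladder through \lemm{all_infinite} and two applications of \lemm{ma_fin} down to the finite-dimensional additivity of \lemm{add_finite_EB}, using the fact that subchannels and output-truncations of an entanglement-breaking channel remain entanglement-breaking (hence have vanishing coherent information) and lifting back via \lemm{continuity}. If anything you are slightly more explicit than the paper in noting that the second application of \lemm{ma_fin} to the $\Psi$ factor requires a role-swapped variant of that lemma whose conclusion retains the energy constraint on the other factor — a point the paper passes over silently.
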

\begin{proof}
 Since the quantum capacity of any entanglement-breaking channel is zero due to the anti-degradablility of the entanglement-breaking channels and the no cloning theorem, $\tI_{c,h}(\Phi)=0$.
 $\tI_{c,h_\ma,h'_\mb}(\Phi\otimes \Psi) \geq \tI_{c,h'}(\Psi)$ is trivial, so it suffices to show $\tI_{c,h_\ma,h'_\mb}(\Phi\otimes \Psi) \leq \tI_{c,h'}(\Psi)$. To this end, we shall first show that
 \ba
  I_c(\rho,\Phi\otimes\Psi) \leq \tI_{c,h'}(\Psi),\ \forall \rho\in\tD_{h_\ma,h'_\mb}(\ma\otimes\mb).
  \label{eq:EB_state}
 \ea
 To show \eq{EB_state}, note that any subchannel of entanglement-breaking channel is also entanglement breaking. Thus, by virtue of \lemm{all_infinite}, it suffices to show that
 \ba
  I_c(\rho,\tilde{\Phi}\otimes\tilde{\Psi}) \leq \tI_{c,h'}(\tilde{\Psi}),\ \forall \rho\in\tD_{h_\ma,h'_\mb}(\tilde{\ma}\otimes\tilde{\mb})
  \label{eq:EB_FI}
 \ea
 for any entanglement-breaking channel $\tilde{\Phi}\in T(\tilde{\ma},\ma')$ with $\dim\tilde{\ma}<\infty$, $\dim\ma'\leq\infty$ and any channel $\tilde{\Psi}\in T(\tilde{\mb},\mb')$ with $\dim\tilde{\mb}<\infty$, $\dim\mb'\leq\infty$.
 This can be shown by using \lemm{ma_fin} twice. Let $\tilde{\Psi}'$ be a channel with input space as well as output space being finite-dimensional.
 Combining \lemm{add_finite_EB} with \lemm{ma_fin}, we get \eq{EB_FI} with $\tilde{\Psi}$ being replaced with $\tilde{\Psi}'$.
 We then use \lemm{ma_fin} again to promote $\tilde{\Psi}'$ to $\tilde{\Psi}$ to complete the proof of \eq{EB_FI}, which implies \eq{EB_state} by \lemm{all_infinite}.

\end{proof}

\end{appendices}


\begin{thebibliography}{10}
%
\bibitem{Holevo} A. S. Holevo,
Prob. Inf. Trans. {\bf 9}, 177 (1973).

\bibitem{Shor95} P. W. Shor,
Phys. Rev. A {\bf 52}, R2493 (1995).

\bibitem{Private} C. H. Bennett and G. Brassard,
Proc. IEEE Int. Conf. Computers Systems and Signal Processing, 175 (1984).

\bibitem{Hastings} M. B. Hastings,
Nat. Phys. {\bf 5}, 255 (2009).

\bibitem{Smith08} G. Smith, J. M. Renes, and J. A. Smolin,
Phys. Rev. Lett. {\bf 100}, 170502 (2008).

\bibitem{Li} K. Li, A. Winter, X. Zou, and G.-C. Guo,
Phys. Rev. Lett. {\bf 103}, 120501 (2009).

\bibitem{DiVin} D. P. DiVincenzo, P. W. Shor, and J. A. Smolin,
Phys. Rev. A {\bf 57}, 830 (1998).

\bibitem{Smith07} G. Smith and J. A. Smolin,
Phys. Rev. Lett. {\bf 98}, 030501 (2007).

\bibitem{Science} G. Smith and J. Yard,
Science {\bf 321}, 1812 (2008).

\bibitem{Smith} G. Smith, J. A. Smolin, and J. Yard,
Nat. Photonics {\bf 5}, 624 (2011).

\bibitem{RMP} C. Weedbrook, S. Pirandola, R. Garc\'{i}a-Patr\'{o}n, N. J. Cerf, T. C. Ralph, J. H. Shapiro, and S. Lloyd,
Rev. Mod. Phys. {\bf 84}, 621 (2012).

\bibitem{Lim18} Y. Lim and S. Lee,
Phys. Rev. A {\bf 98}, 012326 (2018).

\bibitem{book} J. Eisert and M. M. Wolf,
{\it Quantum Information with Continuous Variables of Atoms and Light}
(Imperial College Press, London, 2007).

\bibitem{Holevobook} A. S. Holevo,
{\it Quantum Systems, Channels, Information: A Mathematical Introduction}
( de Gruyter, Berlin, 2013).

\bibitem{Caves} C. M. Caves,
Phys. Rev. D {\bf 26}, 1817 (1982).

\bibitem{EB} A. S. Holevo,
Prob. Inf. Trans. {\bf 44}, 3 (2008).

\bibitem{Wildebook} M. Wilde,
{\it Quantum Information Theory}
(Cambridge University Press, Cambridge, 2017).

\bibitem{Watrous} J. Watrous,
{\it The Theory of Quantum Information}
(Cambridge University Press, Cambridge, 2018).

\bibitem{Elton} E. Y. Zhu,
Private communication.

\bibitem{Stine} W. F. Stinespring,
Proc. Am. Math. Soc. {\bf 6}, 211 (1955).

\bibitem{Devetak05} I. Devetak,
IEEE Trans. Inf. Theory, {\bf 51}, 44 (2005).

\bibitem{Hayden08} P. Hayden, P. W. Shor, and A. Winter,
Open Sys. Inf. Dyn., {\bf 15}, 71 (2008).

\bibitem{Adesso} G. Adesso, S. Ragy, and A. R. Lee,
Open Sys. Inf. Dyn., {\bf 21}, 1440001 (2014).

\bibitem{Serafini} A. Serafini,
{\it Quantum Continuous Variables: A Primer of Theoretical Methods}
(CRC Press, London, 2017).

\bibitem{Caruso} F. Caruso, J. Eisert, V. Giovannetti, and A. S. Holevo,
New J. Phys. {\bf 10}, 083030 (2008).

\bibitem{Holevo07} A. S. Holevo,
Prob. Inf. Trans. {\bf 43}, 1 (2007).

\bibitem{bound} R. F. Werner and M. M. Wolf,
Phys. Rev. Lett. {\bf 86}, 3658 (2001).

\bibitem{Holevo01} A. S. Holevo and R. F. Werner,
Phys. Rev. A {\bf 63}, 032312 (2001).

\bibitem{nohlower} K. Noh and L. Jiang,
arXiv:1811.06989.

\bibitem{upper} M. Rosati, A. Mari, and V. Giovannetti,
Nat. Commun. {\bf 9}, 4339 (2018).

\bibitem{noh} K. Noh, V. V. Albert, and L. Jiang,
IEEE Trans. Inf. Theory, doi: 10.1109/TIT.2018.2873764.

\bibitem{plob} S. Pirandola, R. Laurenza, C. Ottaviani, and L. Banchi,
Nat. Commun. {\bf 8}, 15043 (2017).

\bibitem{thmode} G. Adesso, A. Serafini, and F. Illuminati,
Phys. Rev. A {\bf 73}, 032345 (2006).



\bibitem{Kamil} K. Br\'{a}dler,
J. Math. Phys. A {\bf 48}, 125301 (2015).

\bibitem{Wolf13} D. Lercher, G. Giedke, and M. M. Wolf,
New J. Phys. {\bf 15}, 123003 (2013).


%


\end{thebibliography}

\begin{thebibliography}{10}
%
\bibitem{Holevo2010} A. S. Holevo and M. E. Shirokov,
Prob. Inf. Tran. {\bf 46}, 201 (2010).

\bibitem{Wolf2007} M. M. Wolf, D. P\'{e}rez-Garc\'{i}a, and G. Giedke,
Phys. Rev. Lett. {\bf 98}, 130501 (2007).

\bibitem{Shirokov2006} M. E. Shirokov,
Comm. Math. Phys. {\bf 262}, 137 (2006).

\bibitem{Watrous_app} J. Watrous,
{\it The Theory of Quantum Information}
(Cambridge University Press, Cambridge, 2018).

\bibitem{Elton_app} E. Y. Zhu,
Private communication.

\end{thebibliography}
\end{document}